\newcommand{\Nats}{{\mathbb{N}}}             
\newcommand{\Reals}{{\mathbb{R}}}            
\newcommand{\eps}{\varepsilon}               
\newcommand{\FSD}{\mathsf{FSD}_{\delta}}
\newcommand{\spine}{\mathsf{SP}_{\delta}}
\newcommand{\Spine}{\mathsf{SP}}
\newcommand{\slice}{\mathsf{SL}_{\delta}}
\newcommand{\Slice}{\mathsf{SL}}
\newcommand{\distance}{\mathsf{D}}
\newcommand{\Gdistance}{\overrightarrow{\delta_G}}
\newcommand{\WGdistance}{\overrightarrow{\delta_{wG}}}
\newcommand{\Frechet}{\mathsf{F}}
\newcommand{\wFrechet}{\mathsf{wF}}
\newcommand{\Traversal}{\overrightarrow{\mathsf{\delta_T}}}
\DeclarePairedDelimiter\segment {\langle}{\rangle}
\newcommand{\spinal}{spinal~}
\newcommand{\Prnt}{\mathsf{Parent}}
\newcommand{\true}{\mathsf{T}}
\newcommand{\false}{\mathsf{F}}
\newcommand{\Cld}{\mathsf{Children}}
\newcommand{\enq}{\mathsf{Enqueue}}
\newcommand{\deq}{\mathsf{Dequeue}}
\newcommand{\pop}{\mathsf{Pop}}
\newcommand{\push}{\mathsf{Push}}
\newcommand{\weight}{\mathsf{W}}
\def\D{{\cal D}}
\def\E{{\cal E}}
\def\F{{\cal F}}
\def\I{{\cal I}}
\def\P{{\cal P}}
\def\R{{\cal R}}
\def\T{{\cal T}}
\def\V{{\cal V}}
\newcommand{\figref}[1]{Figure~\ref{#1}}
\newcommand{\myemph}[1]{\emph{#1}}
\newcommand{\mathemph}[1]{#1}
\newtheorem{theorem}{Theorem}
\newtheorem{lemma}[theorem]{Lemma}
\newtheorem{observation}[theorem]{Observation}
\newtheorem{definition}[theorem]{Definition}
\newtheorem{remark}{Remark}
\def\eps{{\varepsilon}}
\newcommand{\frechet}{Fr\'echet}
\newcommand{\payAttention}[3]{{[[{\color{#1}{\textbf{#2: }}}{\color{blue}{#3}}]]}}
\newcommand{\xSays}[3]{\payAttention{#1}{#2 Says}{#3}}
\newcommand{\carola}[1]{\xSays{Purple}{Carola}{#1}}
\newcommand{\majid}[1]{\xSays{NavyBlue}{Majid}{#1}}
\definecolor{lgray}{gray}{0.5}
\title{Minimum-Complexity Graph Simplification under Fr\'echet-Like Distances\footnote{This project has been supported by NSF grant (AitF: NSF-CCF 1637576). 
	}
}
\author{Omrit Filtser\thanks{Department of Applied Mathematics and Statistics, State University of New York at Stony Brook, NY. U.S., omrit.filtser@gmail.com}~~~
	Majid Mirzanezhad\thanks{Transportation Research Institute, College of Engineering, University of Michigan, Ann Arbor, MI. U.S., miirza@umich.edu}~~~
	 Carola Wenk \thanks{Department of Computer Science, Tulane University, New Orleans, LA. U.S., cwenk@tulane.edu}}
\date{}
\begin{document}

\maketitle
\begin{abstract}
	
Simplifying graphs is a very applicable problem in numerous domains especially in computational geometry. 
Given a geometric graph and a threshold, the minimum-complexity graph simplification asks for
computing an alternative graph of minimum complexity
so that the distance between the two graphs remains at most the
threshold. In this paper we propose several NP-hardness and
algorithmic results depending on the type of input and simplified
graphs, the vertex placement of the simplified graph, and the distance measures
between them (graph and traversal distances~\cite{abksw-19, aerw-mpm-03}). 
In general, we show that for arbitrary input and output graphs, the
problem is NP-hard under some specific vertex-placement of the simplified graph. 
When
the input and output are trees, and the graph distance is applied from the
simplified tree to the input tree, we give an $O(kn^5)$ time algorithm,
where $k$ is the number of the leaves of the two trees that are identical and $n$ is the number of vertices of the input.
\end{abstract}
\section{Introduction}\label{sec:intros}

Unlike curve simplification problem, simplifying structurally more complicated input objects such as trees and graphs has not been extensively studied in the computational geometry community. This problem may have applications in GIS, image processing, shape analysis, mesh simplification, molecular biology, etc.~\cite{glw-mpstd-07, l-hdrknn-91, cms-cmsa-98, bs-rnaspsaknn-06}. 
In  a generic application, a user may wish to obtain a coarse and simpler representation of a map preserving the geometry of the underlying structure. This can bring the idea of computing an alternative graph with minimum-complexity to the scene of simplification. 
There are a few works that study approximating a planar subdivision of a map (plane graph) with the minimum number of links under some topological constraints \cite{estkowski01simple, ghms-apswmlp-93}. Most of the algorithms have applied to GIS data and are based on map schematization \cite{m-smcs-14} in which, roughly speaking, the main topological structure of the map remains the same and paths with vertices of degree two become simplified. A generalization of map schematization under some topological constraints, e.g., facet preserving, no self-intersecting boundary simplification can be found in \cite{funke2017map,m-apsstcep-18}.

There are a few works considering the simplification of a given planar subdivision inside a polygonal region with another minimum-link planar subdivision homeomorphic to the original one. In this setting the input is a plane graph and the problem is more concerned with topological constraints of the subdivision inside the polygon \cite{estkowski01simple,ghms-apswmlp-93}. Most of the optimization problems on simplifying graphs fall into NP-hard or APX-hard classes of problems. In \cite{estkowski01simple} a heuristic algorithm for planar maps is proposed that keeps the boundary of  polygonal regions simple after simplification while it is impossible to give a polynomial-time algorithm within $n^{0.2-\epsilon}$ approximation factor for the problem, for any $\epsilon>0$, assuming $\textsf{P} \neq \textsf{NP}$.  

In the context of simplification, one can consider different variants of the (graph) simplification problem induced by the input parameters/constraints such as the type of the distance measure, e.g., Fr\'echet, Hausdorff distances, the direction (if the distance is  asymmetric), topological constraints (facet preservation), and others. Note that some distances are not symmetric, therefore it is crucial in which direction the distance is applied. 
We start with defining the problem setting, and we propose algorithmic as well as NP-hardness results for the problem. Our objective is to study the minimum complexity simplification problem for graphs and examine the difficulty of the problem for different types of input and simplified output, i.e., trees, and graphs. Suppose we are given a positive real number and an input graph that can be either a tree or graph in general. We are interested in approximating the input graph using another graph with the minimum complexity, where the distance from the input to the output (or the opposite) is at most the given threshold. 
We call this generic problem the \textsc{Minimum-Complexity Graph Simplification} (\textsc{MCGS}). 
Here, the input is meant to be more complex than the output object in terms of the structure. For example, the output of the \textsc{MCGS} with an input tree may not admit a graph (with cycles) but either a tree or a path. 
We define the problem for graphs but it can apply to trees as well:  

\paragraph{The \textsc{MCGS} Problem:}
Let $\delta>0$ be a real value, $\distance(\cdot,\cdot)$ be a distance
measure between graphs, and $G=(V,E)$ be a (connected) graph in $\Reals^d$, whose edges
in $E$ are straight-line segments between the vertices in $V$. We aim
to compute an alternative (connected) graph $G'$, with the minimum-complexity satisfying $\distance(G, G')\leq \delta$.

\section{Classification of the Problem}

Our objective is systematically go over different combinations of the input/output graphs, vertex restrictions and distance measures. By \emph{`minimum-complexity'}, we consider minimizing the total number of edges and vertices of a connected graph. 
We use two Fr\'echet-like (directed) distances between graphs; traversal and graph distances, in particular. Under a \emph{graph mapping}, a graph is mapped continuously to a portion of  the other, in such a way that edges are mapped to paths in the other graph. The \emph{graph distance} is then defined as the maximum of the Fr\'echet distances between the edges and the paths they are mapped to. The \emph{traversal distance}  converts graphs into curves by traversing the graphs continuously and comparing the resulting curves
using the Fr\'echet distance. In other words, it compares the traversal of a man on a graph with the traversal of his dog on part of the other graph while staying close to each other~\cite{abksw-19}.

For the vertex placement of  the simplified graph, we differentiate between \emph{vertex-restricted}, \emph{edge-restricted}, and \emph{non-restricted} variants. In the vertex-restricted case, a simplified graph selects its vertices from a subset of the input vertices, while in the edge-restricted case it selects a subset of points on any edge of the input graph. If a simplified graph selects its vertices from anywhere in the ambient space then it is a non-restricted simplification. A \emph{subgraph-restricted} simplification is a special case of the vertex-restricted setting in which the simplified graph is a subgraph of the input graph. Another special case of the vertex-restricted setting is the \emph{leaf-restricted} simplification which requires the degree-one vertices of the output to be identical to a subset of degree-one vertices of the input graph. This might be helpful in capturing the structure of the graph (see e.g.~\cite{abksw-19}). 
See Figure~\ref{fig:variants} to understand the relationships between the restrictions. We will explain the formal definition on this in the dedicated sections.
\begin{figure}[htbp]
	\centering
	\includegraphics[width=.6\textwidth]{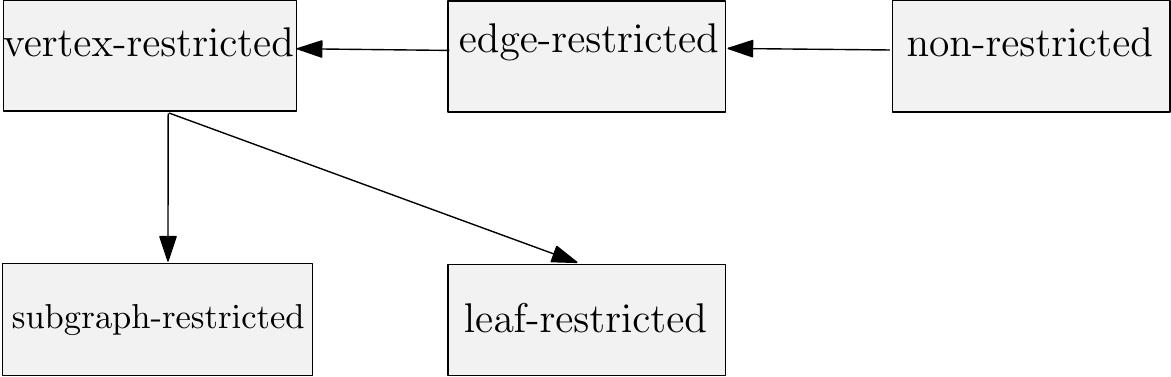}
	\caption[Different variants of the MCGS and their relationships]{Three primary simplification variants above contain the two other secondary variants as depicted. An arrow from a variant $A$ to a variant $B$ of the problem implies that any solution to $B$ can be a solution to $A$ as well.}
	\label{fig:variants}
\end{figure}

In principle, we call a variant of \textsc{MCGS} with restriction $\R$ on the placement of vertices of the simplified graph that simplifies $G_1$ to a minimum-complexity simplified graph $G_2$, applying the distance $\overrightarrow{\distance}(G_2,G_1)$ (from output $G_2$ to input $G_1$), an \emph{$\R$-restricted min-complexity $G_1$-$G_2$ simplification under $\overrightarrow{\distance}(G_2,G_1)\leq \delta$}. 
See Figure~\ref{fig:differentsimp} for better understanding of different simplifications from input to output. 
\begin{figure}[htbp]
	\centering
	\includegraphics[width=0.65\textwidth]{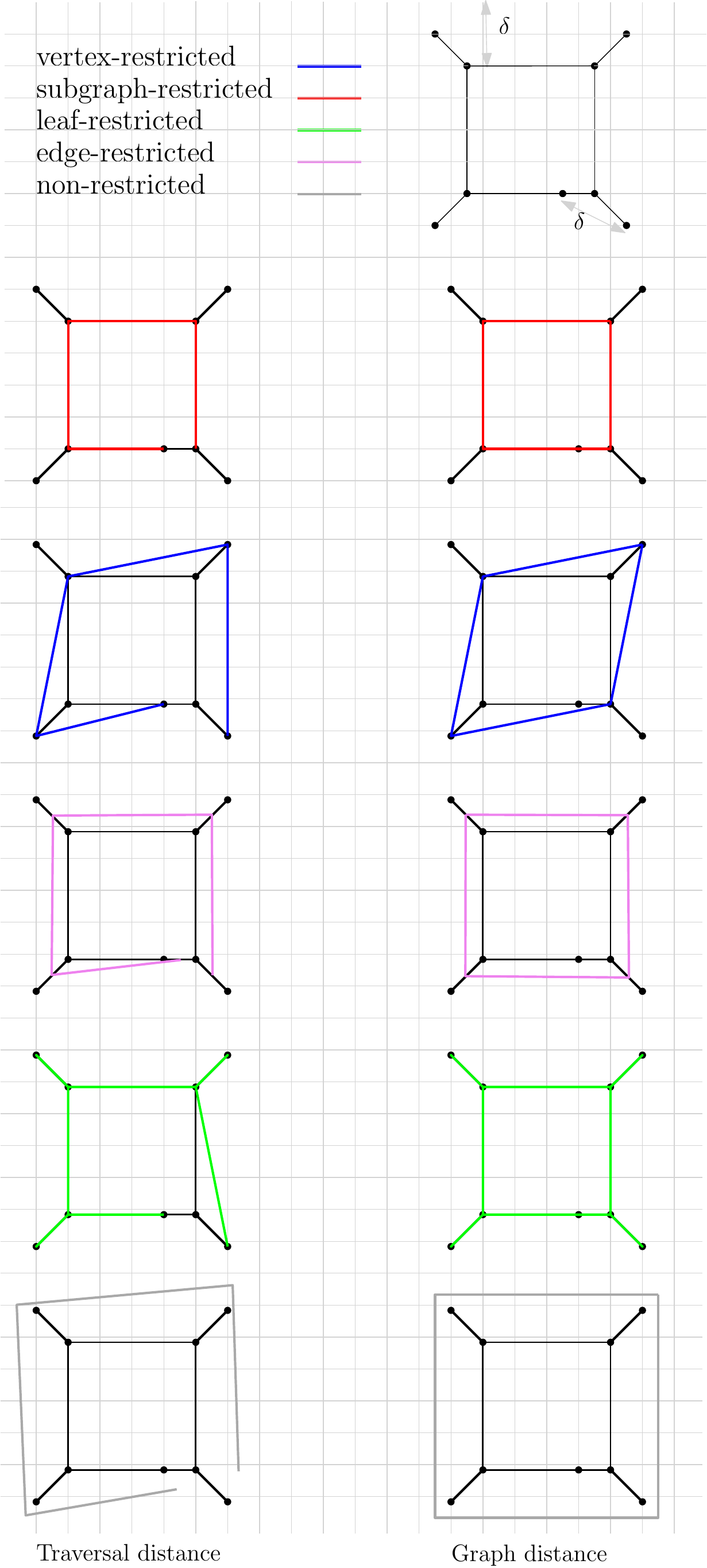}
	\caption[Minimum-Complexity simplification of a graph with different restrictions]{Minimum-Complexity simplification of a graph with different restrictions. Under the traversal distance the simplified graphs are curves in this example.}
	\label{fig:differentsimp}
\end{figure}

As mentioned earlier, the \textsc{MCGS} problem is clearly a very generic problem. The hardness of a problem variant or the efficiency of the algorithm for that variant depends critically on the choice of the distance $\distance$ between the input and the output as well as the vertex-placement restriction. 
We wish to consider a certain type of distance measures between graphs that extends the minimum-link simplification problem under the Fr\'echet distance for curves in~\cite{kklmw-gcs-19} to the one for graphs. 
While our main concern in this work is to take the geometry between the input and output graphs into account, by changing the input and output over graphs and trees we somehow give the user the choice of retaining the topology. This way we may control the topology between the input and output graphs unlike the other existing works that propose algorithmic treatment for maintaining the topology~\cite{m-apsstcep-18, estkowski01simple}.   


There are various distances considered between graphs and trees in the literature such as ``Graph edit distance''~\cite{jh-ablpfged-06, cgkss-msgg-08}, and ``Contour tree distance''~\cite{bos-ctd-17}. The former does not respect the continuity of  the curves and the latter is also a generalization of the Fr\'echet distance to graphs, but NP-hard to compute between them. In this paper, we focus on two Fr\'echet-like distance measures between graphs and/or tree; the \emph{graph distance} proposed in~\cite{abksw-19} and \emph{traversal distance} in~\cite{aerw-mpm-03}. As shown in~\cite{abksw-19}, the traversal distance is not greater than the graph distance. Throughout the paper,  we use the term ``graph distance'' to refer to both weak and strong types, unless we specifically mention the type of the distance. A comparison of the traversal distance and graph distance can be found in Figure~\ref{fig:dist-comparison}. 
\begin{figure}[htbp]
	\begin{center}
		\includegraphics[width=.30\textwidth]{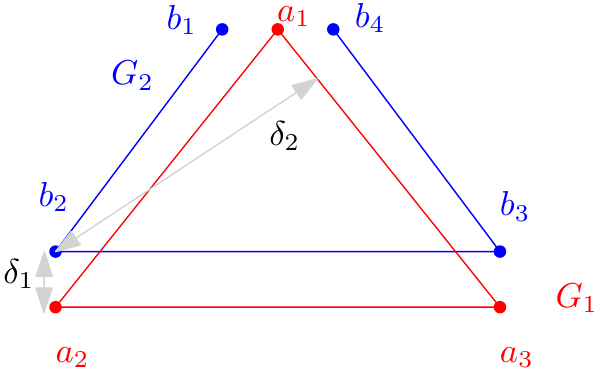}
		\caption[Comparison between the traversal and graph distances]{Under the optimal mapping $\mu:G_1 \rightarrow G_2$, where $\mu(a_1) = b_1,~\mu(a_2)=b_2$, and $\mu(a_3)= b_3$, we have the graph distance equal $\delta_2$ while the traversal distance equal $\delta_1$ with $\delta_2 \gg \delta_1$.}
		\label{fig:dist-comparison}
	\end{center}
\end{figure}

\subsection{Our results}
Inspired by globally simplifying a curve in~\cite{kklmw-gcs-19} we have restrictions on the placement of the vertices of the simplified graph.  In this paper, we primarily study two restrictions; vertex- and edge-restricted. We first show that the vertex-restricted min-complexity tree-tree simplification under the traversal distance from input to output is NP-hard (Theorem~\ref{thm:tree-tree-traversal}). Although the NP-hardness of the same variant under the graph distance remains elusive, we give a fixed-parameter tractable algorithm that runs in $O\big(n^3\alpha^2 2^\alpha\big)$ time and $O\big(n^5+n^3\alpha^2 2^\alpha\big)$ time under the weak and strong graph distances, respectively, from  input to the output tree. Here, $\alpha $ is an implicit parameter that, roughly speaking, is the number of  intersections between the simplified tree and the ball of radius $\delta$ around each vertex of the input tree (Theorem~\ref{thm:fixedTTGD}). 
\begin{table}[htbp]
	\small
	\centering
	\begin{tabular}{|l||*{3}{c|}}\hline
		\backslashbox{Restriction}{Distance}
		&\makebox{Graph distance}&\makebox{Traversal distance}&\makebox{Assumptions} \\\hline\hline
		
		Vertex-restricted & \makecell{$O\big(n^5+n^3\alpha^2 2^\alpha\big)$ \\ (Thm.~\ref{thm:fixedTTGD})} &\makecell{NP-hard \\ (Thm.~\ref{thm:tree-tree-traversal})}&\makecell[l]{ Tree-to-Tree  \\ Input $\rightarrow$ Output} \\\hline
		
		Edge-restricted & \makecell{(Weakly) NP-hard \\ (Thm.~\ref{thm:edge-restricted-gg}) }& \makecell{(Weakly) NP-hard \\ (Thm.~\ref{thm:edge-restricted-gg}) } &\makecell[l]{ Graph-to-Graph \\ Tree-to-Tree \\ Input $\rightarrow$ Output }\\\hline 
		
		Subgraph-restricted  & \makecell{NP-hard \\ (Thm.~\ref{thm:subgraph-graph-graph})} & ?
		&\makecell[l]{ Graph-to-Graph \\ Input $\rightarrow$ Output}\\\hline
		\multirow{3}{*}{Leaf-restricted} & \makecell{NP-hard \\ (Thm.~\ref{thm:leaf-restricted-tree})} &\makecell{NP-hard \\ (Thm.~\ref{thm:leaf-restricted-tree})}&\makecell[l]{ Tree-to-Tree \\ Input $\rightarrow$ Output\\~}\\
		& \makecell{$O(kn^5)$ \\ (Thm.~\ref{thm:OTTLGD})} & ? &\makecell[l]{ Tree-to-Tree \\ Output $\rightarrow$ Input  } \\\hline
		
	\end{tabular}
	\caption[Our results on  the \textsc{MCGS} problem]{\label{tbl:results} Our results on  the \textsc{MCGS} problem. 
	}
\end{table}
As the vertex-restricted variant appears to be hard to admit a  fully polynomial time algorithm we take our investigation further to two related variants; subgraph- and leaf-restricted ones. 
We show that the subgraph-restricted minimum-complexity graph-to-graph simplification under the graph distance
from input to output is also NP-hard (Theorem~\ref{thm:subgraph-graph-graph}).  
We show that the leaf-restricted tree-tree simplification under both graph and traversal distances from input to output is NP-hard (Theorem~\ref{thm:leaf-restricted-tree}).
However, when the direction of the distance changes from output to input, while the leaves of the output must be identical  to $k$ leaves of the input tree, we propose an $O(kn^5)$ time algorithm that uses $O(kn^2)$ space (Theorem~\ref{thm:OTTLGD}). 
Although the two latter variants might be considered special cases, still the difficulty of the  problem does not significantly change regardless of the distance measure we use. We take this investigation as part of our systematic study in this work and we believe that one can define other variants for which the problem admits efficient polynomial time algorithms.
In the end, by a modification of the construction in~\cite{kklmw-gcs-19} we show that the edge-restricted min-complexity graph-graph and tree-tree simplification under the graph and traversal distances becomes (weakly) NP-hard. This weakly NP-hardness result holds for all types of input and output graphs, i.e., graphs, trees, and curves (Theorem~\ref{thm:edge-restricted-gg}). See a summary of our results in Table~\ref{tbl:results}.

\begin{remark} Although tree is a special case of  graph, the NP-hardness for $\R$-restricted min-complexity graph-graph simplification does not immediately follow from the one for $\R$-restricted min-complexity tree-tree simpiifcation since these two problems do not have the same type of output. The former outputs a graph but the latter necessarily outputs a tree.  
\end{remark}

\section{Preliminaries}\label{sec:prelim}
We first begin with introducing curves and the Fr\'echet distance between them. Let $P=\langle p_1,p_2,\cdots, p_n \rangle$ be a polygonal curve.
We treat $P$ as a continuous
map $P:[1,n] \rightarrow \mathbb{R}^d$, where $P(i)=p_i$ for an integer
$i$, and the $i$-th edge is linearly parameterized as $P(i + \lambda) =
(1-\lambda) p_i + \lambda p_{i +1}$.
We write $P[s,t]$ for the subcurve between
$P(s)$ and $P(t)$ and $\segment{P(s)P(t)}$ for the line segment connecting the two points. 
Given two curves $P:[1,n]\rightarrow \mathbb{R}^d$ and $Q:[1,m]\rightarrow \mathbb{R}^d$, the \frechet\ distance between $P$ and $Q$ is defined as:

$$\Frechet(P,Q)=\inf_{f,g}\max_{t\in[0,1]}\Vert P(f(t))-Q(g(t))\Vert ,$$
where $f:[0,1]\rightarrow [1,n]$ and $g:[0,1]\rightarrow [1,m]$ are continuous non-decreasing functions. If $f$ and $g$ are not  non-decreasing functions, then the obtained distance is called \emph{the weak Fr\'echet distance}  denoted by $\wFrechet(P,Q)$. The \emph{free space diagram} of the two curves $P$ and $Q$ of complexities of $n$ and $m$, respectively, is denoted by $\FSD(P,Q)$. This diagram  consists of
$(n-1)\times (m-1)$ cells in the domain $[1,n]\times[1,m]$.   

For any $\delta>0$, the free space diagram $\FSD(P,Q) $ consists of \emph{cells} and the boundary of each cell consists of four sides which each contains at most one \emph{free space interval}. An interval is part of an edge of $P$ that is within $\delta$ to a vertex of $Q$ and vice versa. A \emph{monotone path} from $(1,1)$ to $(n,m)$ that lies entirely within the free space
corresponds to a pair of monotone re-parameterizations
$(f,g)$ that witness $\Frechet(P, Q)\leq\delta$. Alt and
Godau showed that such a \emph{reachable path} can be computed in
$O(mn)$ time by propagating reachable points across free space cell
boundaries in a dynamic programming
manner \cite{ag-cfdb-95}.
%
See \figref{fig:gcfs} for an example of free
space diagram for two curves.
\begin{figure}[!t]
	\begin{center}
		\includegraphics[width=.75\textwidth]{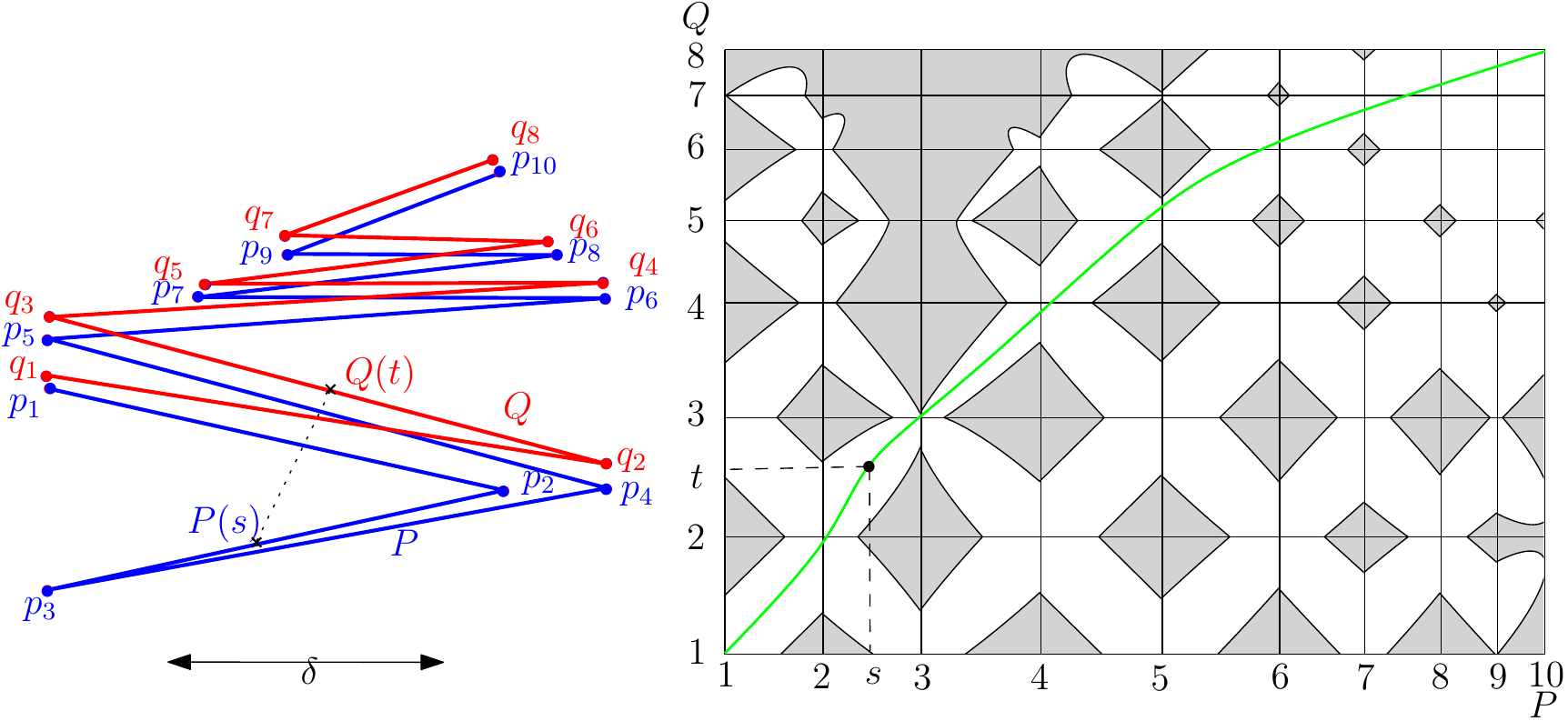}
		\caption{$(s,t)$ is a free point on a reachable path in $\FSD(P,Q)$  The free space is shown in white.}
		\label{fig:gcfs}
	\end{center}
\end{figure}

Now let $G = (V,E)$ be a graph \emph{immersed} in $\Reals^d$, where $V(G) = \{v_1,v_2,\cdots,v_n\}$, $v_i$ is embedded at a point $p \in \Reals^d$ for all $i\in \{1,\cdots,n\}$, and each edge $e=\segment{uv}\in E(G)$ is also linearly parameterized.
A continuous mapping $f:[0,1]\rightarrow G$ is called a \emph{traversal} of graph $G$ if it is surjective, and a \emph{partial traversal} of $G$ if it is not necessarily surjective. Given two graphs $G_1,G_2$ immersed in $\Reals^d$, their \emph{traversal distance} (introduced in \cite{aerw-mpm-03}) is:

$$
\Traversal(G_1,G_2)=\inf_{f,g}\max_{t\in[0,1]}\Vert f(t)-g(t)\Vert ,
$$
where $f:[0,1]\rightarrow G_1$ is a traversal of $G_1$, and $g:[0,1]\rightarrow G_2$ is a partial traversal of $G_2$. 

Suppose $\mu$ is an arbitrary mapping from graph $G_1$ to graph $G_2$ that maps every vertex of $G_1$ to some point on $G_2$. In other words $\mu(v) = p$ for all $v \in V(G_1)$ and $p \in G_2$. Given an edge $\segment{uv} \in E(G_1)$, with a slight abuse of the notation, we have $\mu(e) = \P$, where $e = \segment{uv}$, and $\P$ is a path starting at $\mu(u)$ and ending at $mu(v)$ on $G_2$. Now a \emph{graph mapping} (introduced in \cite{abksw-19}) is
 a function $\mu: G_1 \rightarrow G_2$ that
(1) maps each vertex $v\in V(G_1)$ to a point $\mu(v)$ on an edge of $G_2$, and (2) maps each edge $e  = \segment{uv}\in E(G_1)$ to a simple path $\mu(e)$, from $\mu(u)$ to $\mu(v)$, in $G_2$. 
The \emph{directed (strong) graph distance} between $G_1$ and $G_2$ is:
$$\Gdistance(G_1,G_2)=\inf_{\mu: G_1 \rightarrow G_2}\max_{e\in E(G_1)} \Frechet(e,\mu(e)).$$ Note that the path between $\mu(u)$ and $\mu(v)$ in $G_2$ may not exist which results in $\Frechet(e,\mu(e)) = \infty$. This case can occur if $G_2$ is disconnected.
The \emph{directed (weak) graph distance}  denoted by $\WGdistance$ is  obtained by replacing the $\Frechet$ with $\wFrechet$ in the definition.

\section{Vertex-Restricted  Tree-Tree Simplification under the Traversal Distance}\label{sec:vertexrestricted}
In this section we show that  the  vertex-restricted min-complexity tree-tree
simplification under the traversal distance from input to output is NP-hard. Our reduction is from the \emph{minimum dominating set of unit disk graph} (\textsc{MDSUDG}) problem: given a unit disk graph $G=(V,E)$ in the plane  with $P = V(G)$ and $n = |P|$,
the \textsc{MDSUDG} problem asks for a set $S \subseteq P$ of minimum size such that every vertex in $P-S$ is adjacent to at least one vertex in $S$. The 
\textsc{MDSUDG} problem is known to be NP-hard~\cite{ccj-udg-90}.  Our reduction
takes an instance of \textsc{MDSUDG} and converts it to $T$ in $\Reals^2$ that is somewhat a star graph:
Let $\delta = 1$ and let $B$ be the smallest axis-parallel box that contains all unit disks around vertices in $P=\{p_1,\cdots, p_n\}$. Let $P'=\{p'_1,\cdots,p'_n\}$ be the set of vertices that are obtained by vertically translating the vertices in $P$ upward. The obtained translated vertex set is denoted by $P'$. Note that the length of the translation dominates $1$ significantly. We draw a straight-line segment between every vertex in $P$ and its corresponding translated vertex in $P'$. Aside from the vertices in $P\cup P'$ we have some other vertex $w$ where every vertex in $P'$ is connected through a path to it. Such a path is called \emph{bottleneck path}. We similarly define box $B'$ with respect to the point set $P'\cup \{w\}$.  For the rest of the reduction we have the following construction in box $B'$:
\begin{figure} [!h]
	\begin{center}
		\includegraphics[width=0.7\textwidth]{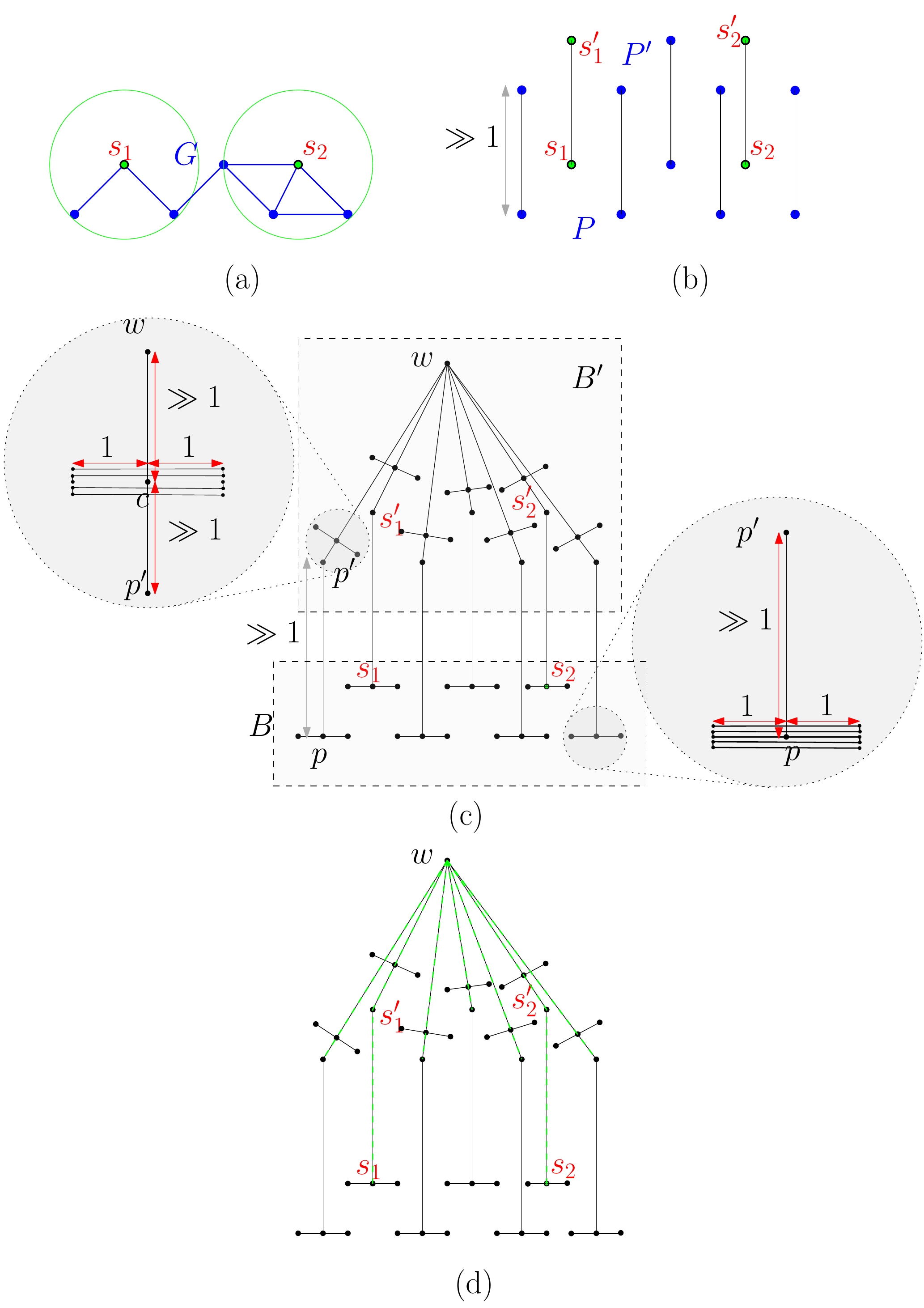}
	\end{center}
	\caption[The reduction tree for the vertex-restricted variant]{(a) The unit disk graph $G$. Here $\{s_1, s_2\}$ is a minimum dominating set for $G$. (b) The original point set $P$ at the bottom, which is identical to the vertex set of $G$, and the translated point sets $P'$ at the top. (c) The reduction tree, and the bottleneck paths. Any simplified link should pass through the center of the twist (the edges of the zigzag edges are distorted vertically for a better presentation). (d) $T$ is presented in black solid line, and $T'$ shown in green dashed line.}
	\label{fig:nphardness2d}
\end{figure}
\paragraph{Bottleneck path:}
A path consisting of a \emph{straight} edge between $p' \in P'$ and $w$, and a set of edges {prependicular} onto the straight-line edge that are called \emph{zigzag} edges. The zigzag edges and straight edge intersect each other at a vertex called the \emph{center} $c$ of the bottleneck path, thus the straight edge is broken down into two  edges, i.e., $\segment{p'c}$ and $\segment{cw}$. The length of the zigzag  edges is 2, where the left endpoint of the edge is at distance 1 to the center and the right endpoint of it is at distance 1 to the center as well. The length of each straight edge $\segment{p'c}$ and $\segment{cw}$ is significantly larger than 1. 

The center $c$ is clearly a vertex of degree 4 (see Figure~\ref{fig:nphardness2d}). 
This way any link simplifying the bottleneck path between $p'_i$ and $w$, has to pass through the center of the path and does not simplify another bottleneck path between $p'_j$ and $w$ at the same time with $j\neq i$. In other words, a simplified tree has exactly $n$ links connecting $w$ to all $p'_i$. The only remaining part is to place the centers and $w$ in a way that none of the pairs of paths ending at $w$ overlay onto each other and therefore cause to simplify multiple bottleneck path using one link. This can be done in polynomial time by not locating $w$ on the line supporting $p$ and $p'$ for all $p, p'\in P\cup P'$. Finally we place the same zigzag edges on each $p \in P$. This restricts our simplified tree's leaves  not to end at the leaves of $T$ but at the center of the zigzag edges that are originally the vertices in $P$ . 
The following theorem proves that there exists a dominating set of size $k$ if and only if there exists a simplified tree with $n+k$ edges.

\begin{theorem} \label{thm:tree-tree-traversal}
	Let  $T$  be a tree in $\Reals^2$ with $n$ vertices, and $\delta>0$. Computing a vertex-restricted min-complexity tree-tree simplification $T'$ under $\Traversal(T,T')\leq \delta$ is NP-hard.
\end{theorem}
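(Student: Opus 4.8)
The plan is to establish NP-hardness by completing the reduction from \textsc{MDSUDG} set up above; the entire content of the proof is to verify the equivalence already announced there, namely that the unit disk graph $G$ on $P=\{p_1,\dots,p_n\}$ has a dominating set of size at most $k$ if and only if the tree $T\subset\Reals^2$ (which has $O(n)$ vertices and is built in polynomial time with $\delta=1$) admits a vertex-restricted simplification $T'$ with $\Traversal(T,T')\le 1$ and at most $n+k$ edges. Since \textsc{MDSUDG} is NP-hard, the theorem follows. Two preliminary normalizations are convenient: rescale the given instance so that adjacency of $p_i,p_j$ means $\|p_i-p_j\|\le 1-\rho$ for a small $\rho$ determined by the twist amplitude (harmless, since the class of unit disk graphs is scale-invariant), and place $w$, the vertical translation offset, and the centers $c_i$ generically --- no two bottleneck straight edges collinear, and no straight segment between two vertices of $T$ passing within $1$ of $w$, of any $c_i$, or of any $p'_i$ to which it is not incident --- which is achievable in polynomial time since those data are under our control.

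Two structural properties of the gadgets carry the argument; I would prove them as claims about any pair $(f,g)$ that witnesses $\Traversal(T,T')\le 1$, with $f$ a surjective traversal of $T$ and $g$ a partial traversal of $T'$. \textbf{Bottleneck pinch:} whenever $f$ lies on the twist at a center $c_i$, the point $g$ is confined to an arbitrarily small neighborhood of $c_i$; since $f$ must also visit $w$ and every $p'_i$, the genericity above forces $w$ and all $p'_i$ to be vertices of $T'$, joined in $T'$ by a path through $c_i$, and no single link of $T'$ can serve two different bottleneck paths; hence $T'$ uses at least $n$ distinct ``bottleneck'' edges, and, $T'$ being a tree, every further edge adds exactly one further vertex. \textbf{Leaf pinch:} the twist placed at a point $p_i\in P$ forces that any sub-curve of $T'$ that stays within $1$ of all of arm $i$ below $p'_i$ while $f$ traverses that arm must either pass through $p_i$ --- whence $p_i\in V(T')$ by genericity and vertex-restriction --- or else shadow $f$ along the lower part of some other realized arm $j$ (one with $p_j\in V(T')$); since the lower arms are pairwise congruent translates, that shadowing keeps $\|f(t)-g(t)\|$ equal to $\|p_j-p_i\|$, so it is possible only when $p_j$ is adjacent to $p_i$ in $G$.

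With these in hand, both directions are short. \emph{Forward:} from a dominating set $S$ with $|S|\le k$, let $T'$ consist of the $n$ links $\segment{w p'_i}$ (each passing through $c_i$) together with the lower arm $\segment{p'_j p_j}$ and the twist at $p_j$ for every $p_j\in S$; this is a connected tree with $n+|S|\le n+k$ edges, and an explicit monotone traversal shows $\Traversal(T,T')\le 1$ --- on arm $i$, $g$ shadows $f$ along $\segment{w p'_i}$ for the bottleneck part, and for the lower part along arm $i$ of $T'$ itself if $p_i\in S$ and along a fixed dominator's arm $j\in S$ otherwise, which is legitimate because those two lower arms differ by the translation $p_j-p_i$ of norm at most $1-\rho<1$. \emph{Backward:} take a feasible $T'$ with at most $n+k$ edges, normalize it so that its only vertices outside $\{w,p'_1,\dots,p'_n\}$ lie in $P$ (deleting or sliding away superfluous twist-tip or $c_i$ vertices without increasing the edge count or breaking feasibility), charge $n$ of its edges to the bottleneck links via the bottleneck pinch, leaving at most $k$ edges and hence at most $k$ vertices of $P$; put $S=V(T')\cap P$, so $|S|\le k$, and the leaf pinch yields that every $p_i\notin S$ has a neighbor in $S$, so $S$ dominates $G$.

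The main obstacle is making the two pinch properties rigorous for the \emph{traversal} distance rather than for the Fr\'echet distance of two fixed curves: $g$ traces an essentially arbitrary curve on a planar tree, so one must show that it cannot ``go around'' a twist center, and this is exactly where the twist geometry (as opposed to a single perpendicular tooth) and the generic placement of $w$ and the $c_i$ are used, and one must verify that it is the surjectivity of $f$ that pins $g$ to each twist center; the infimum in the definition of $\Traversal$ is handled by the usual compactness argument. A secondary but genuine subtlety is the normalization step in the backward direction and the bookkeeping that turns ``at most $k$ extra tree edges'' into ``at most $k$ extra vertices of $P$'', which must in particular rule out that an edge of $T'$ passes through some $p_i$ in its interior without $p_i$ being one of the charged vertices --- here one uses that the long arm segments are pairwise nearly parallel, so the only edges of $T'$ that can stay within $1$ of $\segment{p'_i p_i}$ along its whole length have both endpoints among the $p'_\ell$ and $p_m$, with the low endpoint adjacent to $p_i$.
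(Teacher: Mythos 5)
Your reduction is the paper's: the same \textsc{MDSUDG} construction, the same claimed equivalence (a dominating set of size at most $k$ exists iff a feasible $T'$ with at most $n+k$ edges exists), and broadly the same two-direction argument, augmented by rescaling/genericity normalizations and by two ``pinch'' claims that make explicit what the paper only asserts. The genuine gap is in your forward direction, and it is not cosmetic. You place into $T'$, for each dominator $p_j\in S$, ``the lower arm $\segment{p'_jp_j}$ \emph{and the twist at $p_j$}'' --- and you do need that twist: when the man traverses the twist at a dominated but non-selected $p_i$, its two tips are at distance $2$ from each other, so no single point is within $1$ of both except $p_i$ itself, and for generic relative positions of $p_i$ and $p_j$ at least one tip is at distance greater than $1$ from the entire straight link $\segment{p'_jp_j}$; the dog really must walk a translated copy of the twist, which must therefore be present in $T'$. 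But each twist contributes two additional edges (the center is a vertex splitting the perpendicular segment), so your $T'$ has $n+3|S|$ edges, not the claimed $n+|S|$, and the equivalence at threshold $n+k$ fails. Nor is this repaired by moving the threshold to $n+3k$: in the backward direction a realized arm whose dominator covers only itself is feasible with a single edge (the dog sits at $s_j$, at distance exactly $1$ from both of its own tips), so you cannot charge three edges per dominator, and $n+3k$ edges would only certify a dominating set of size $3k$. You have in fact located a real subtlety --- the paper's forward direction never says where the dog is while the man traverses the leaf twist at a non-selected $p_i$, and simply parks the dog at $s_i$ --- but your patch breaks the edge accounting that the reduction lives on.

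A secondary issue: the two pinch claims carry essentially all the structural content (that $n$ edges are forced in box $B'$, that no link serves two bottleneck paths, that a non-selected $p_i$ forces an adjacent selected $p_j$), and you explicitly defer their proofs as ``the main obstacle.'' The paper is no more rigorous there, so I would not count that alone against you; but note that your leaf-pinch statement quietly reuses the same problematic shadowing (``shadow $f$ along the lower part of some other realized arm $j$''), which again presupposes that the twist at $p_j$ belongs to $T'$. Before this becomes a proof you must either re-specify the construction (for instance, shrink the leaf twists so that a dog sitting at an adjacent dominator covers both tips, and then re-verify that the shrunken twists still pin the endpoint of a realized arm) or redo the edge counting consistently in both directions.
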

\begin{proof}
	By construction $\delta=1$. Let the man walk on $T$ and his dog walk on the prospective  $T'$ such that they stay within distance 1 from each other which implies that $\Traversal(T,T')\leq 1$. Recall that  the man should traverse the entire $T$. Let $k>0$ be an integer as the decision parameter to the decision version of \textsc{MDSUDG}, and suppose there exists a dominating set $S=\{s_1,\cdots,s_m\}$ of the vertices in a unit disk graph $G$, i.e., $P=V(G)$. Analogously, let $S'=\{s'_1,\cdots,s'_m\}$ be the dominating set of the unit disk graph induced by vertices in $P'$. In other words, there is a set of unit disks $\D=\{D_1,\cdots, D_m\}$  covering the entire points in $S$ where the centers of these disks are the vertices in $S$. 
	
 Suppose there is a dominating set of size at most k, i.e., $m \leq k$. We show there exists a min-complexity simplified tree $T'$ whose number of edges is at most $n+k$.  We break down the man and dog's walks into the two following stages:
	
	(i) Box $B'$: By construction $T'$ has to select its edges as $\segment{wp'_i}$ where $p'_i \in P'$ for all $1\leq i\leq n$. The man and dog have identical walks on each straight edge $\segment{wp'_i}$. For the zigzag edges, the dog stays at the center $c$ of the bottleneck path and the man traverses all zigzag edges and arrives at the center. Then he continues his walk along with the dog. Thus, $T'$ has $n$ links so far.
	
	(ii) Box $B$ and $B'$: For the rest of the simplification, $T'$ selects the edges $\segment{s'_is_i}$, for all $1\leq i\leq m$.  In this case, the man and dog have similar walks on the path starting at $s'_i$ and ending at $s_i$. The dog stays on $s_i$ and the man walks through the zigzag edges and ends the walk on $s_i$ as well. However, for each path starting at $w$ and ending at $p$, where $p \in D_i$, the man and dog both reach $w$ at the same time, then the dog returns to $s_i$ and the man goes to $p$ only through the straight edge while staying closest to the dog and not going through the zigzag edges this time (the man has gone through the zigzag edges already as part of his walk in box $B'$, first stage). Note that this walk is possible since the two paths with straight edges from $w$ to $s_i$ and from $w$ to $p$ are within distance at most 1 from each other. Therefore $T'$ has  $m$ more edges, thus overall $n+m \leq n+k$ edges.
	
	Now suppose that there exists a minimum-complexity simplified tree $T'$ with the number of edges $n+m$ with $m \leq k$. We show that there exists a dominating set $S=\{s_1, \cdots, s_m\}$ for $G$ with $m\leq k$. First note that $T'$ selects at least one edge per bottleneck path in box $B'$ by construction, thus at least $n$ edges in box $B'$ overall. The remaining argument is for the edges of $T'$ and the edges $\segment{pp'}\in T$ for all $p \in P$ and $p'\in P'$ in box $B$.  We should show that if $T'$ selects at most $m$ edges in box $B$ such that the traversal distance between the edges in $T'$ and $T$ in box $B$ is at most 1, then there is a dominating set $S=\{s_1, \cdots, s_m\}$ for $G$ with $m\leq k$. Suppose that $T'$ selects $\{a_1,\cdots,a_m\} \subseteq P$. Since the dog should walk on $a_i$, with $i=1,\cdots, m$, the only possible vertices of $T$ for the man to walk on is $v \in D_i$. This means that there exists a set of unit disks $\{D_1,\cdots, D_m\}$ that covers all vertices in $P$ which corresponds to a dominating set of size $m\leq k$ for $G$. This completes the proof.
\end{proof}
	\begin{remark}
	The construction proposed in the proof of Theorem~\ref{thm:tree-tree-traversal} may not apply to the case under graph distance from $T$ to $T'$. Let $\mu$ be the mapping realizing $\Gdistance(T,T')\leq 1$. Since we need one edge per bottleneck path we have $\mu(c)=c$, and $\mu(w)=w$. For the points in $P'$ we take $\mu(p')= s'_i$ for all $i = 1,\cdots, m$ and all $p'\in D_i$.  Under such a mapping $\mu$, we have $\Frechet(\segment{p'c}, \P)> 1$, since $\|c - w\|> 1$, and $\| s'_i - w \| > 1$,  where $\P$ is the path starting at $\mu(p') = s'_i$ and ending at $\mu(c)= c$. Therefore $\Gdistance(T,T') > 1$, and the construction fails.
\end{remark}
\section{Vertex-Restricted   Tree-Tree Simplification under the Graph Distance }\label{sec:FPT-tree-tree}

In this section, we give a fixed parameter polynomial-time algorithm for vertex-restricted minimum-edge tree-tree simplification under graph distance from input to output. throughout the section we assume that $T$ is a rooted tree. The key idea is to consider the free space diagram between all edges of the input tree and all edges of the complete graph $G$ induced by the vertices of $T$. We denote this free space diagram between $G$ and $T$ by $\FSD(G,T)$. Then the optimal solution is a subtree of $G$ using a minimum number of vertices in $G$ such that there is a reachable path between each edge of $T$ and a subpath in $G$ across the respective free space diagram. Here, $\Gdistance$ can be either the strong or the weak graph distance, so we describe our generic algorithm which works for both versions of the graph distance.
Note that for a given tree $T$ and a graph $G$ deciding whether $\Gdistance(T,G)\leq \delta$ takes polynomial time~\cite{abksw-19}. Since the graph here is the complete graph whose edges are shortcuts (straight-line segments) between every pair of vertices in $T$, it is not hard to see that there is always a tree $T'\subset G$ for which $\Gdistance(T,T')\leq \delta$. This raises the question of whether one can compute a minimum-edge $T'\subset G$ such that $\Gdistance(T,T')\leq \delta$ or not.

We first compute the free space between every edge in $T$ and the entire graph $G$. Such a free space is called chunk. We then connect different chunks together with respect to the adjacency of edges in $T$. We then find a reachable path throughout $\FSD(G,T)$ that crosses the minimum number of vertices in $G$.
Given a tree $T = (V,E)$ in $\Reals^d$, we define the \emph{shortcut graph} of $T$ as $G=G(T)=(V(G),E(G))$, where $V(G)= V(T)$ and 
$E(G)=\lbrace \segment{uv}~|~u,v \in V_G \rbrace$. Each edge $e=\segment{uv} \in E(G)$ is linearly parameterized. 
The parameter space of $G$ is $E(G)\times [0,1]$ and the parameter space of $T$ is $E(T) \times [0,1]$. 
%
%
%
%
%

Now, let $\delta>0$, and consider the joint parameter space $E(T) \times [0,1] \times E(G)\times [0,1]$ of $T$ and $G$. Any $(e,\lambda,e',\gamma)\in E(T) \times [0,1] \times E(G)\times [0,1]$ is called \emph{free}
if $\|e(\lambda)-e'(\gamma)\|\leq \delta$, and the union of all free points are
referred to as the \emph{free space}.
%
A chunk is comprised of two main components: (1) spine; the free space between an edge in $T$ and a vertex in $G$, and (2) slice; the free space between a vertex in $T$ and the entire graph $G$; see \figref{fig:fss}. 
For any $v \in V(T), {v'}\in V(G)$, $e\in E(T)$ and $e' \in E(G)$ we call $\mathemph{\Spine(v')}= e \times [0,1] \times v'$ a \emph{spine}, and $\mathemph{\Slice(v)}= \cup_{e' \in E(G)} v \times e'\times [0,1]$ a \emph{slice}. We also denote the free space within a spine, and a slice as: $$\spine(v')=\lbrace (\lambda,v') ~|~ 0\leq \lambda \leq 1,\; \|e(\lambda)-{v'}\|\leq \delta \rbrace,$$ and: $$\slice(v)=\lbrace (v,\gamma) ~|~ 0\leq \gamma \leq 1,\; \|v-e'(\gamma)\|\leq \delta\rbrace,$$ respectively.
%
%
For an edge $e=\segment{uv}\in E(T)$, it holds that  $\Slice(u), \Slice(v)\subset \FSD(e,G)$ and $\Slice(u)$ is a subset of all free spaces with respect to edges in $T$ incident on $u$.

\begin{definition}
	For every $v \in V(T)$, 
	a free space interval $I \in \slice(v)$ is called \emph{elementary} if $I$ lies completely within $v \times e'\times[0,1]$, for some $e' \in E(G)$; see Figure~\ref{fig:fss}. 
\end{definition}
\begin{figure}[!h]
	\centering
	\includegraphics[width=.8\textwidth]{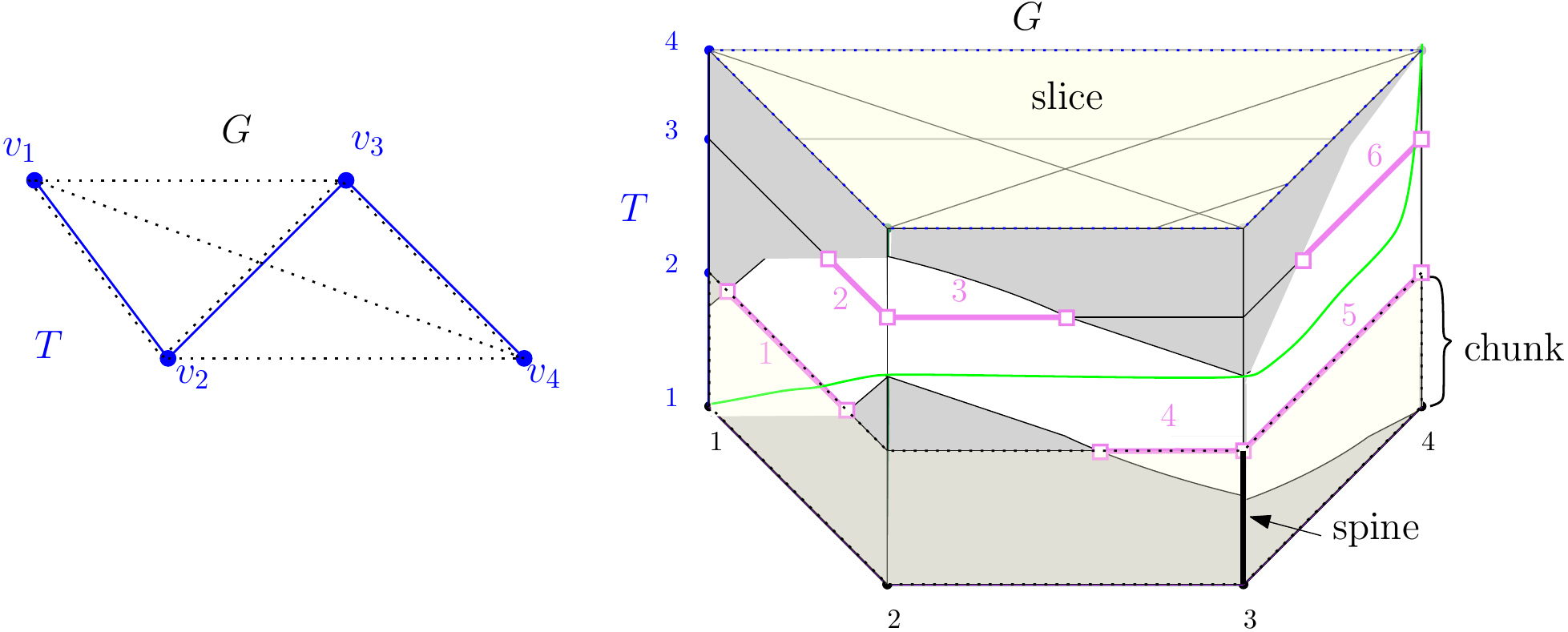}
	\caption[Elementary intervals]{An example of a tree, its complete graph in dotted line, and the free space surface between them. There are six elementary intervals in the free space surface. A reachable path highlighted in green uses a minimum number of spines (vertices of $G$) while traversing all the edges in $T$.}
	\label{fig:fss}
\end{figure}

Our aim is to  propose a dynamic programming algorithm for  this problem. Before we get to the algorithm we need to introduce some notions. Let $T$ be a rooted tree. In this setup every vertex $v \in V(T)$ is the parent of a set of vertices if (1)  $v$ is the neighbor of the vertices, and (2) takes fewer vertices on $T$ to reach the root of $T$. The main idea of our algorithm is to start constructing optimal trees rooted at elementary intervals of $\slice(v)$ for all $v\in V(T)$, where $v$ is a leaf and then propagate the minimum-link rooted subtrees bottom-up towards the root in a dynamic programming fashion. For this, we associate a cost function $\psi:[0,1]\rightarrow \Nats $ with each elementary interval $I$, where $\psi(I)$ is the number of vertices in a minimum-vertex simplified subtree rooted at $I$. When a simplified tree $T'$ \emph{is rooted at an elementary interval} $I \in \slice(u)$ it means that $T'$ simplifies the subtree of $T$ rooted at $u$, and the mapping realizing $\Gdistance(T,T')\leq \delta$ matches $u$ to a point $x \in I$. Let $T'_I$ denote a simplified tree rooted at $I$, and let $\weight(T'_I)$ be the \emph{weight} of  $T'_I$ that indicates the number of vertices on it. Clearly, for an $I \in \slice(u)$, $\psi(I)= \min_{T'_I} \weight(T'_I)$.

We also extend our notations as follows: Suppose in a rooted tree $T$, $u$ is the parent of $v$ in $T$ and $\Cld(u)= \lbrace {v_1},\cdots,{v_k}\rbrace$ is the set of children of $u$ and correspondingly $\Prnt(v_i)=u$, for all $i=1,\cdots,k$. 
Each $\slice(v_i)$ consists of a set of elementary intervals, for all $i\in \{1,\cdots, k\}$. Note that an elementary interval might belong to different sets of elementary intervals associated with different slices. In other words, each $\slice(v_i)$, for all $i \in \{1,\cdots,k\}$, has a set of elementary intervals as depicted in Figure~\ref{fig:fss} and an elementary interval $I$ can belong to $\{\slice(v_{i_1}),\cdots, \slice(v_{i_m})\}$ where $\{i_1,\cdots, i_m\} \subseteq \{1,\cdots, k\}$.  At this point, we say that $I \in \slice(v_j)$ \emph{covers} a subset of vertices $\{v_{i_1},\cdots, v_{i_m}\}$ where $\{i_1,\cdots, i_m\} \subseteq \{1,\cdots, k\},$ and we denote this by $C(I)=\{v_{i_1},\cdots v_{i_m}\}$ for some $j \in \{1,\cdots,k\}$.  Suppose $\I^*= \{I_1,\cdots, I_\ell\}$ is the universal set of all elementary intervals over all $\slice(v_i)$ for all $i \in \{1,\cdots, k\}$. 
For any $I\in \slice(u)$ where $u$ is a not a leaf in $V(T)$, we propose the following recursive formula:

\begin{align}
\psi(I) &= \min\limits_{\I \subseteq \I^*} \sum\limits_{I' \in \I} \big(\psi(I') + \kappa(I',I)\big), \label{formula:DP_tree}
\end{align}

where $\I$ ranges over all subsets of $\I^*$ such that $\cup_{I\in \I} C(I) = \Cld(u)$, i.e., 
all elementary intervals in $\I$ together cover all vertices in $\Cld(u)$. Here, $\kappa(I,I')$ is the number of vertices in a minimum-edge reachable path between $I \in \slice(u)$ and $I'\in \slice(v_i)$ in $\FSD(e,G)$ where $e=\segment{uv_i}$. In other words, $\kappa(I,I')$ is the minimum number of spines (vertices in $G$) that a reachable path has to cross to reach $I'$ starting at some point in $I$. If $u$ is a leaf then $\psi(I)= 1$.

We are now ready to prove the correctness of the formula:

\begin{lemma} [Correctness] \label{lem:correctness}
	Let $I\in \slice(u)$ for some $u \in V(T)$. Then the recursive formula (\ref{formula:DP_tree}) correctly computes $\psi(I)$. 
\end{lemma}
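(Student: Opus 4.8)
The plan is to prove the lemma by induction on the height of $u$ in the rooted tree $T$, establishing two inequalities: $\psi(I) \le \min_{\I} \sum_{I'\in \I}(\psi(I')+\kappa(I',I))$ and $\psi(I) \ge \min_{\I} \sum_{I'\in \I}(\psi(I')+\kappa(I',I))$. Recall that $\psi(I)$ is defined independently as $\min_{T'_I}\weight(T'_I)$, the minimum number of vertices over all simplified trees rooted at $I$; the content of the lemma is that this quantity satisfies the stated recurrence. The base case is immediate: if $u$ is a leaf, the subtree of $T$ rooted at $u$ is a single vertex, any simplified tree rooted at $I$ can be taken to be a single point $x \in I$, so $\psi(I)=1$, matching the formula's base clause.

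For the inductive step, assume $u$ is not a leaf with $\Cld(u)=\{v_1,\dots,v_k\}$, and assume the recurrence (equivalently, the inductive hypothesis that $\psi$ is correctly characterized) holds for every child. First I would prove $\le$: given a collection $\I\subseteq\I^*$ with $\bigcup_{I'\in\I}C(I')=\Cld(u)$, and for each $I'\in\I$ an optimal rooted subtree $T'_{I'}$ of weight $\psi(I')$, I construct a simplified tree rooted at $I$. Pick a point $x\in I$; for each $I'\in\I$, route a minimum-vertex reachable path in $\FSD(\segment{uv_i},G)$ — where $v_i$ is any vertex in $C(I')$ — from $x$ to the root point of $T'_{I'}$, using $\kappa(I',I)$ interior spines; the key point is that since $I'\in\slice(v_i)$ for \emph{every} $v_i\in C(I')$, the single path simultaneously realizes $\Frechet(\segment{uv_i},\cdot)\le\delta$ for all those edges, so one connection serves all children covered by $I'$. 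Attaching all these pieces at $x$ yields a simplified tree rooted at $I$ simplifying the subtree of $T$ at $u$, whose weight is $\sum_{I'\in\I}(\psi(I')+\kappa(I',I))$ — the $+1$ for sharing the root $x$ and the minor bookkeeping of which endpoints are counted must be checked, but it is routine. Minimizing over $\I$ gives the bound.

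For the reverse inequality $\ge$: take an optimal simplified tree $T'$ rooted at $I$ realizing $\Gdistance(T,T')\le\delta$ via a mapping $\mu$ with $\mu(u)=x\in I$. For each child $v_i$, the edge $\segment{uv_i}$ maps to a path $\mu(\segment{uv_i})$ in $T'$ from $x$ to $\mu(v_i)$; the point $\mu(v_i)$ lies in some elementary interval $I_i\in\slice(v_i)$, and the subtree of $T'$ hanging below $\mu(v_i)$ is a simplified tree rooted at $I_i$ of weight $\ge\psi(I_i)$. Distinct children whose paths from $x$ coincide entirely correspond to the same elementary interval covering all of them; grouping the children this way produces a subcollection $\I$ of elementary intervals with $\bigcup C(I')=\Cld(u)$, and accounting for shared prefixes of the paths from $x$ shows $\weight(T')\ge\sum_{I'\in\I}(\psi(I')+\kappa(I',I))\ge\min_\I(\dots)$.

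The main obstacle I anticipate is the bookkeeping in the $\ge$ direction: carefully arguing that the paths $\mu(\segment{uv_i})$ can be decomposed into a shared portion near $x$ and disjoint suffixes so that vertices of $T'$ are counted exactly once, and that the "covering" structure of elementary intervals faithfully captures when one reachable path can serve several children (this is exactly where the definition of $C(I)$ and the fact that an elementary interval can belong to several slices does the work). A secondary subtlety is that in the strong (monotone) graph-distance case the reachable paths must be monotone, so one must check that a minimum-vertex \emph{monotone} reachable path exists and that $\kappa$ is well-defined as stated; the weak case is analogous without the monotonicity constraint. Everything else — the freedom to place $\mu(u)$ anywhere in $I$, the additivity of vertex counts along concatenated paths — is straightforward.
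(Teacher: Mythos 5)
Your proposal is correct and follows essentially the same route as the paper's proof: induction on the rooted tree, with the inductive step decomposing an optimal simplified tree rooted at $I$ into optimal subtrees rooted at a covering set of elementary intervals plus minimum-vertex connecting paths of weight $\kappa(I',I)$. You spell this out as two explicit inequalities and flag the vertex-counting bookkeeping that the paper's single exchange-style argument passes over silently, but the underlying idea is identical.
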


\begin{proof}
	We prove by induction. First we consider the case where $u$ is a leaf in $T$ as the inductive base. Let $\mu$ be a mapping realizing $\Gdistance(T,T')\leq \delta$. For the sake of minimality of $T'_I$ there should only be one point $p \in I$ such that $\mu(u)=p$. Therefore $T'_I$ is a single-vertex tree and $1\leq \psi(I)\leq \weight(T'_I) = 1$.
	
	Now we consider the case where $u$ is not a leaf in $T$. Suppose that $T^*_I$ is an optimal simplified tree rooted at $I$. Observe that $T^*_I$ passes through a set of elementary intervals $\I\subseteq \I^*$ covering all vertices in $\Cld(u)$.  
	Therefore the number of vertices of $T^*_I$ is equal to the sum of the weights of subtrees rooted at each interval $I'$ in $\I$ and the weights of simple paths $P$ connecting each of those intervals to $I$: 
	$$\psi(I)= \min\limits_{T'_I} \weight(T'_I) = \weight(T^*_I) = \sum\limits_{I' \in \I} (\weight(T^*_{I'})+\weight(P(I',I))).$$
	
	Note that $\weight(P(I',I)) = \kappa(I',I)$ otherwise $T^*_I$ would no longer be optimal. Also we know that $\weight(T^*_{I'}) = \psi(I')$ by the inductive hypothesis.  
	Thus we have: 
	
	$$\psi(I)= \weight(T^*_I) = \sum\limits_{I' \in \I} \big(\psi({I'})+\kappa(I',I)\big).$$
	
	Realize that $\I$ is a subset of $\I^*$ for which the expression $\sum_{I' \in \I} \big(\psi({I'})+\kappa(I',I)\big)$ is minimum compared to other subsets of $\I^*$ due to the optimality of $T^*_I$. Therefore: 
	
	$$\psi(I)= \weight(T^*_I) = \min\limits_{\I \subseteq \I^*} \sum\limits_{I' \in \I} \big(\psi({I'})+\kappa(I',I)\big),$$ as desired.
\end{proof}

We now present our dynamic programming algorithm in more detail as follows: Similar to~\cite{abksw-19}, we first compute the connected components of all vertices in $T$ (the part of $G$ lying within the balls of radius $\delta$ around the vertices of $T$). For two neighboring vertices $u$ and $v$, we prune all invalid paths between $u$ and $v$ in $G$ for which the Fr\'echet distance to $e=\segment{uv}$ is greater than $\delta$. This gives us a pruned graph $G'$. Now we construct $\FSD(G',T)$ and define the slices with respect to $G'$ and $T$.
We perform a BFS search on $T$ starting at the root and store the vertices into an auxiliary stack $S$ in the order they are being encountered along the search. If a vertex $u$ is a leaf in $T$ then we set $\psi(I)=1$ for all $I\in \slice(u)$ and $\psi(I)=\infty$, otherwise. Once we processed all the vertices in $T$, we pop the vertices from $S$. For every popped vertex $u$ and every elementary interval $I\in \slice(u)$, we compute $\psi(I)$ using the recursive formula. We repeat the process until we reach the root $r$ and $I$, for all $I \in \slice(r)$. In the end, we backtrack and find those elementary intervals in $\FSD(G',T)$ that contain the minimum $\psi$ values. The vertices of the simplified tree are the endpoints of those intervals that end (or start) at their neighboring spines.   

\begin{lemma} \label{lem:G'}
	Constructing $G'$ takes $O(n^3)$ under the weak graph distance and $O(n^5)$ under the strong graph distance.
\end{lemma}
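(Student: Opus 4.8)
The plan is to account separately for the two stages that produce $G'$: (a) the $\delta$-ball / connected-component preprocessing inherited from~\cite{abksw-19}, and (b) the edge-by-edge pruning of images in $G$ whose Fr\'echet distance to the corresponding edge of $T$ exceeds $\delta$. Since $G$ is the shortcut (complete) graph on $V(T)$ we have $|V(G)|=n$ and $|E(G)|=\binom{n}{2}=O(n^2)$, while $T$ has exactly $n-1$ edges; these counts drive everything. For stage (a), for each of the $n$ vertices $v\in V(T)$ we clip each of the $O(n^2)$ edges of $G$ against the ball $B(v,\delta)$ of radius $\delta$ centered at $v$ in $O(1)$ time, producing the intervals of $\slice(v)$, and then run a BFS/union--find over the resulting $O(n^3)$ sub-edges to obtain the connected components. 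This costs $O(n^3)$ regardless of which graph distance is used, so it is subsumed by both target bounds.

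For stage (b) under the \emph{weak} graph distance, I would, for each edge $e=\segment{uv}\in E(T)$, build the free space diagram $\FSD(e,G)$. As $e$ is a single segment, this diagram is just one free space cell per edge of $G$, i.e.\ $O(n^2)$ cells glued along the vertices of $G$. Because the weak Fr\'echet distance only needs a (possibly non-monotone) path through the free space, a point of $G$ lies on some valid image of $e$ precisely when its cell sits in a connected component of the free space that meets both a valid start (a point of $\slice(u)$ at parameter $0$ of $e$) and a valid end (a point of $\slice(v)$ at parameter $1$). A single connected-components sweep over the $O(n^2)$ cells therefore marks all surviving sub-edges for $e$ in $O(n^2)$ time; taking the union of the survivors over all $n-1$ edges of $T$ gives $G'$, for a total of $O(n^3)$.

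For stage (b) under the \emph{strong} graph distance the cells are the same but I now need \emph{monotone} reachability across $\FSD(e,G)$. I would sweep the parameter $s\in[0,1]$ of $e$ and maintain the reachable part $R(s)$ of the vertical slice $G\cap B(e(s),\delta)$. By convexity, each edge (resp.\ vertex) of $G$ is free for a single interval of $s$ and carries a single free sub-interval at any time, so the combinatorial type of the slice changes only $O(n^2)$ times in total; at each such event one re-propagates reachability inside the current slice, whose $O(n^2)$ free intervals are linked to one another exactly as the corresponding sub-edges of $G$ are, in $O(n^2)$ time. This yields $O(n^4)$ per edge of $T$ and hence $O(n^5)$ overall. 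Equivalently, one may invoke the decision procedure of~\cite{abksw-19} for $\Gdistance(T,G)\le\delta$, run edge-by-edge with the single segment $e$ against $G$, and read off the surviving sub-edges; assembling $G'$ from the $O(n^2)$ survivors over the $n-1$ tree edges is dominated by the same cost.

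The delicate part is the strong case: one has to pin down that there are only $O(n^2)$ sweep events and that each costs $O(n^2)$ to re-propagate through a graph-structured (rather than path-structured) slice, and one has to argue the pruning is exact — in particular that replacing the \emph{simple}-path requirement in the definition of a graph mapping by arbitrary walks does not change which sub-edges survive, since any walk witnessing Fr\'echet distance $\le\delta$ can be short-cut to a simple path without increasing the distance. The weak case is comparatively routine once $\FSD(e,G)$ is in hand, reducing to plain reachability in a cell complex of size $O(n^2)$.
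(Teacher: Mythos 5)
Your bounds are correct and your accounting ($|E(T)|=n-1$, $|E(G)|=O(n^2)$) matches the paper's, but you take a genuinely different route. The paper's entire proof is a citation: it invokes Lemma~4 of~\cite{abksw-19}, which states that pruning the invalid paths takes $O\big(|E(T)|\cdot|E(G)|\big)$ time under the weak graph distance and $O\big(|E(T)|\cdot|E(G)|^2\big)$ under the strong one, and then substitutes the edge counts to get $O(n^3)$ and $O(n^5)$. You instead re-derive these per-edge costs from scratch: a connected-components sweep over the $O(n^2)$ free-space cells of $\FSD(e,G)$ for the weak case, and an $O(n^2)$-event sweep with $O(n^2)$ reachability re-propagation per event for the strong case. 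What the paper's approach buys is brevity and the safety of an already-proven black box; what yours buys is a self-contained argument that makes visible \emph{why} the strong case costs an extra factor of $|E(G)|$, plus an explicit treatment of the simple-path versus walk issue (your convexity-based shortcutting argument for a straight-line edge $e$ is sound and is exactly the point the paper delegates to~\cite{abksw-19}). The one place where your write-up is thinner than a complete proof is the strong-case sweep: you assert, but do not verify, that reachability only needs to be recomputed at the $O(n^2)$ combinatorial events and that nothing relevant changes between events; since you acknowledge this as the delicate step and the claimed bound agrees with the cited lemma, this is a presentational gap rather than an error.
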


\begin{proof}
	According to Lemma 4 in \cite{abksw-19}, pruning all invalid paths in $G$ takes $O\Big(|E(T)|\cdot|E(G)|\Big)$ under the weak graph distance and $O\Big(|E(T)|\cdot|E(G)|^2\Big)$ under the strong graph distance. Since $|E(T)|=|V(T)|- 1 = n-1$, and $|E(G)|= |V(T)|^2= n^2$, therefore the upper bounds can be obtained.
\end{proof}

\begin{lemma} \label{lem:kappa}
	For every $I \in \slice(u)$ and $I' \in \slice(v)$ where $v \in \Cld(u)$, there exists a procedure that computes $\kappa(I,I')$ in $O(n^2)$ time under both the weak and strong graph distances.
\end{lemma}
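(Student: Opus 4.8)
The plan is to reduce the computation of $\kappa(I,I')$ to a single shortest‑path computation inside the \emph{chunk} $\FSD(e,G')$ of the edge $e=\segment{uv}$, where $u=\Prnt(v)$. This chunk is the free space between the single segment $e$ (parametrized by $\lambda\in[0,1]$) and the pruned shortcut graph $G'$; it decomposes into $|E(G')|=O(n^2)$ cells, one cell $e\times[0,1]\times e'$ per edge $e'\in E(G')$, glued along the $O(n)$ spines $\Spine(v')$, $v'\in V(G')$ (we have only $O(n)$ spines and $O(n^2)$ cells because $G'$ sits inside the complete graph on the $n$ vertices of $T$). Two elementary facts drive everything: (a) the free part $\spine(v')$ of a spine is a single interval $[\ell_{v'},r_{v'}]\subseteq[0,1]$ — the preimage of the $\delta$-ball about $v'$ under the affine parametrization of $e$; and (b) the free part of each cell is convex, being the $\le\delta^2$ sublevel set of a convex quadratic in $(\lambda,\gamma)$ intersected with the unit square. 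Since $\mu(e)$ is a simple path in $G'$, a reachable path from $I$ to $I'$ crosses each spine at most once, and the set of vertices of $G'$ it visits is, up to a small additive constant that is straightforward to account for, exactly the number of spines it crosses; hence $\kappa(I,I')$ is a minimum‑spine‑crossing (i.e.\ shortest‑path) quantity.

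For the weak graph distance $\WGdistance$ I would use that reachable paths need not be monotone, so a reachable path from $I$ to $I'$ is just a path inside the free space of $\FSD(e,G')$ joining a point of $I$ (on the $\lambda=0$ side of the cell of $e'_I$) to a point of $I'$ (on the $\lambda=1$ side of the cell of $e'_{I'}$); such a path automatically covers all of $[0,1]$ in the $e$-coordinate, so only connectivity matters. By fact (b) the free part of each cell is connected, so I build a graph $H$ whose nodes are the spines with $\spine(v')\neq\emptyset$ together with two nodes for $I$ and $I'$, placing an edge between two nodes whenever they lie on a common cell whose free part meets both. Then $H$ has $O(n)$ nodes and at most one edge per edge of $G'$, i.e.\ $O(n^2)$ edges, and it is read off in $O(1)$ per cell; a BFS from $I$ yields $\kappa(I,I')=\mathrm{dist}_H(I,I')-1$ in $O(n^2)$ time.

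For the strong graph distance $\Gdistance$, reachable paths must be monotone in $\lambda$. Propagating reachability cell by cell in the Alt--Godau fashion~\cite{ag-cfdb-95,abksw-19} — which is $O(1)$ per (convex) cell — one shows the reachable portion of $\Spine(v')$ is again an interval $[\lambda_{\min}(v'),r_{v'}]$, and that crossing the cell of $\segment{v'v''}$ carries a reachable lower end $a$ at $v''$ to $\max(a,\ell_{v''},\ell_{v'})$ at $v'$, feasibly iff $a\le r_{v''}$ and the result is $\le r_{v'}$. This recurrence is monotone, so all $\lambda_{\min}(\cdot)$ can be obtained by a Dijkstra‑style sweep over $G'$ in $O(n^2)$. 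A useful observation on the target side is that from a spine $\Spine(v')$ that is an endpoint of $e'_{I'}$, reached at \emph{any} $\lambda\le r_{v'}$, one can always complete a monotone path into $I'$ (the segment in the convex cell of $e'_{I'}$ from $(\lambda,0)$ to $(1,\gamma)$ with $\gamma\in I'$ is monotone and free), so it suffices to reach an endpoint of $e'_{I'}$; symmetrically both endpoints of $e'_I$ are reachable from $I$ within its cell. I would then run an unweighted shortest‑path search over the spines, charging one per crossed spine and using only cell transitions compatible with the computed reachability intervals, each test being $O(1)$, for $O(n^2)$ total.

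The main obstacle is entirely on the strong side: how early a spine is reached on $e$ (its attained $\lambda$) and how many spines have been crossed to get there trade off against each other, so a shortest‑path search over merely pairwise‑feasible transitions need not respect feasibility along the whole path. The plan to overcome this is to exploit the bottleneck/monotone structure of the $\lambda_{\min}$ recurrence — along a route attaining $\lambda_{\min}(v')$ the running maximum of the interval left endpoints equals $\lambda_{\min}$ at every step and is non‑decreasing, and every non‑edge of the reachability relation is genuinely unusable — to argue that it suffices to carry, per spine, only the single ``best so far'' reachability value together with the current vertex count, so that no richer state is needed and the search stays within the $O(n^2)$ budget. Turning this last point into an airtight argument is the part requiring the most care.
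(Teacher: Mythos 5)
Your treatment of the weak graph distance is fine and essentially coincides with the paper's: there, only connectivity of the free space matters, so an unweighted BFS over the spines/cells of $\FSD(e,G')$ gives $\kappa(I,I')$ in $O(|V(G')|+|E(G')|)=O(n^2)$ time.

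The strong-distance half has a genuine gap, and it is exactly the one you flag yourself at the end. A monotone path crossing spines $v'_1,\dots,v'_k$ at parameters $\lambda_1\le\cdots\le\lambda_k$ is feasible iff the running maximum of the left endpoints $\ell_{v'_j}$ never exceeds the current $r_{v'_i}$, so the two objectives --- arriving at a spine with a small $\lambda$ and arriving with few crossed spines --- genuinely trade off: a $2$-hop route into a spine may only be realizable at $\lambda=0.5$ while a $3$-hop route reaches it at $\lambda=0.2$, and a downstream spine with $r=0.3$ then kills the shorter route. Hence carrying ``the single best-so-far reachability value together with the current vertex count'' per spine is not enough; in general you would need the Pareto frontier of (hop count, earliest arrival) pairs, or a layer-by-layer BFS recomputing $\lambda_{\min}$ under a hop budget, which does not obviously stay within $O(n^2)$. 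The paper does not resolve this inside the lemma at all: it pushes the monotonicity issue into the construction of the pruned graph $G'$ (Lemma~\ref{lem:G'}, invoking Lemma~4 of~\cite{abksw-19}), whose whole point is that every path surviving in $G'$ between the relevant free-space components is already Fr\'echet-feasible with respect to $e=\segment{uv}$; minimizing the number of spines then really is plain BFS on $G'$, for both distances, and the extra cost of handling the strong distance shows up as the $O(n^5)$ term in Lemma~\ref{lem:G'} and Theorem~\ref{thm:fixedTTGD} rather than here. To repair your argument you should either invoke that pre-pruning explicitly (after which your adjacency-graph BFS is valid verbatim for the strong case too) or supply the missing dominance argument; as written, the last step does not go through.
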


\begin{proof}
	Given two start and end points $I$ and $I'$, $\kappa(I,I')$ can be computed by finding the shortest path in $\FSD(G',T)$ between $I$ and $I'$ that crosses the minimum number of spines, if it exists. If such a path does not exist then we set $\kappa(I,I')=\infty$. 
	This corresponds to computing the shortest path in $G'$ starting at a point in $I'$ and ending at some point in $I$. Computing the shortest path only takes $O\Big(|V({G'})|+|E({G'})|\Big)= O(n^2)$ using a BFS search in $G'$ for both distances. Therefore, the total runtime under both weak and strong graph distances is $O(n^2)$.
	
	
\end{proof}

\begin{theorem} [Runtime] \label{thm:fixedTTGD}
	Let $T$ be a tree in $\Reals^d$ with $n$ vertices and $\delta>0$. There exists an algorithm that computes a  vertex-restricted min-complexity simplified tree $T'$ in $O(\alpha^2 2^\alpha n^3)$ and $O(n^5+\alpha^2 2^\alpha n^3)$ times fulfilling $\WGdistance(T,T')\leq \delta$ and $\Gdistance(T,T')\leq \delta$, respectively, where $\alpha$ is the maximum number of elementary intervals over all slices in $\FSD(G',T)$.
\end{theorem}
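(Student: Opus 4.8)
The plan is to account for the total work of the dynamic programming algorithm as the sum of three contributions: (i) preprocessing to build the pruned shortcut graph $G'$ and the free space diagram $\FSD(G',T)$; (ii) the cost of evaluating all the $\kappa(I,I')$ values; and (iii) the cost of evaluating the recursive formula~(\ref{formula:DP_tree}) at every elementary interval. For (i) I would invoke \lemref{lem:G'} directly: building $G'$ costs $O(n^3)$ under the weak graph distance and $O(n^5)$ under the strong one, and constructing $\FSD(G',T)$ on top of that is $O(|E(T)|\cdot|E(G)|)=O(n^3)$, which is absorbed into the stated bounds. So the $O(n^5)$ term in the strong case is entirely the preprocessing, and everything else is common to both variants.

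Next I would bound the size of the DP tables. Let $\alpha$ be the maximum number of elementary intervals in any slice $\slice(v)$; then the total number of elementary intervals over all slices is $O(\alpha n)$ since there are $n$ slices. For each vertex $u$ with children $\Cld(u)=\{v_1,\dots,v_k\}$, the relevant universal set $\I^*$ of elementary intervals across the slices of its children has size at most $\alpha k$ — but the key observation is that the number of \emph{distinct} intervals that actually appear in $\slice(u)$ and in the $\slice(v_i)$ jointly is bounded, and more importantly the minimization in~(\ref{formula:DP_tree}) ranges over subsets $\I\subseteq\I^*$ with $\bigcup_{I'\in\I}C(I')=\Cld(u)$. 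To make this tractable I would argue that it suffices to consider one interval per distinct "coverage pattern," and because each interval $I'\in\slice(v_i)$ lives in the free space $\FSD(e,G)$ of a single edge $e=\segment{uv_i}$, the number of elementary intervals appearing in the common slice $\slice(u)$ shared by all children is at most $\alpha$; hence the set-cover-style minimization is over a ground set of size $O(\alpha)$, giving $O(2^\alpha)$ subsets to check.

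For each candidate subset $\I$ and each interval $I\in\slice(u)$ we must sum $\psi(I')+\kappa(I',I)$ over $I'\in\I$, which is $O(\alpha)$ additions, and each $\kappa(I',I)$ is computed once in $O(n^2)$ by \lemref{lem:kappa}. Charging the $\kappa$ computations: there are $O(\alpha n)$ intervals total, and each participates as an endpoint in $O(\alpha)$ relevant pairs (one per sibling slice sharing a parent), so there are $O(\alpha^2 n)$ pairs, each costing $O(n^2)$, for $O(\alpha^2 n^3)$ total — this matches the target and I suspect is the dominant term. The DP evaluation itself is $O(n)$ vertices $\times$ $O(\alpha)$ intervals in $\slice(u)$ $\times$ $O(2^\alpha)$ subsets $\times$ $O(\alpha)$ work per subset $=O(\alpha^2 2^\alpha n)$, which is dominated by the $\kappa$ cost. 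Summing: $O(\alpha^2 2^\alpha n^3)$ for the weak case, and $O(n^5+\alpha^2 2^\alpha n^3)$ for the strong case after adding preprocessing. Correctness of the output is \lemref{lem:correctness}, so only the timing needs to be assembled here.

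The main obstacle, I expect, is pinning down exactly why the exponential factor is only $2^\alpha$ rather than $2^{\alpha k}$ or worse: one has to argue carefully that the relevant ground set for the set-cover minimization at $u$ — namely the elementary intervals of $\slice(u)$, which is contained in $\FSD(e,G)$ for each incident edge $e$ simultaneously — has size bounded by $\alpha$, and that iterating over its subsets (rather than over subsets of the larger $\I^*$) genuinely suffices to realize the minimum in~(\ref{formula:DP_tree}). I would handle this by noting that $\Slice(u)$ is a common subset of the free space for every edge of $T$ incident on $u$ (as stated just before the definition of elementary intervals), so a child interval's contribution to covering $\Cld(u)$ is determined by which elementary interval of $\slice(u)$ it is reachable from, collapsing the choice space to $O(2^\alpha)$.
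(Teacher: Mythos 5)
Your proposal follows essentially the same route as the paper's proof: preprocessing via \lemref{lem:G'} accounts for the $O(n^3)$ versus $O(n^5)$ split between the weak and strong distances, the per-vertex evaluation of formula~(\ref{formula:DP_tree}) is treated as a brute-force weighted set cover over a ground set of size $O(\alpha)$ contributing the $2^\alpha$ factor, and the $\kappa$ values from \lemref{lem:kappa} supply the remaining polynomial factors, yielding $O(\alpha^2 2^\alpha n^3)$ plus preprocessing. Your bookkeeping is in fact slightly tidier than the paper's (you charge each $\kappa(I',I)$ once per pair rather than inside the subset enumeration, and you explicitly worry about why the ground set is $O(\alpha)$ rather than $O(\alpha k)$ --- a point the paper simply asserts via $\ell\leq\alpha$), but the decomposition and the final bounds coincide.
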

\begin{proof}
	Let $\T(n)$ be the runtime for constructing $G'$. Depending on the type of the weak or strong graph distance, $\T(n)$ can be different according to Lemma~\ref{lem:G'}. At the beginning part of our dynamic program, the BFS search on the vertices of $T$ together with the stack operations takes $O(n)$ time overall. Now it only remains to show the runtime of computing the recursive formula $\psi(I)$ per $I\in \slice(u)$ when $u$ is not a leaf in $T$. Recall that $\I^*= \{I_1,\cdots, I_\ell\}$ is the set of all elementary intervals over all $\slice(v_i)$ for all $i \in \{1,\cdots, k\}$.
	Given a universal set $\Cld(u)=\{v_1,\cdots,v_k\}$, computing all sets $\I\subseteq \I^*$ covering the entire universal set $\Cld(u)$ together with minimum cost of $\sum_{I' \in \I} (\psi(I')+\kappa(I',I))$,  where every $I \in \I$ consists of a subset of $\Cld(u)$, is equivalent to solving the \textsc{Weighted  Set-Cover} problem.  A brute-force algorithm takes $O(k2^{\ell})$ to compute all sets $\I\subseteq \I^*$ covering the entire universal set $\Cld(u)$ per $I\in \slice(u)$. Note that the runtime of the procedure computing $\kappa(I,I')$ is already $O(n^2)$ (Lemma~\ref{lem:kappa}). Thus computing the formula per $I$ takes $O(k2^{\ell}n^2)$ so far. Having $\ell$ intervals like $I$ in $\slice(u)$ and $n$ vertices like $u$ to process, yields the total runtime of $O(\T(n)+ k\ell2^\ell n^3)$. 
	
	Now suppose the number of intersections between the edge set of $G'$ and ball of radius $\delta$ around $u$ is at most $\alpha$, i.e., the maximum number of elementary intervals over all slices of $\FSD(G',T)$. Realize that since there is always an optimal  solution to the problem, so there is at least one elementary interval on every slice in $\FSD(G',T)$ and thus $\alpha\geq 1$. On the other side, $\alpha\leq  |E(G')|\leq |E(G)| = \frac{n(n-1)}{2}$, by definition, therefore $\alpha \in [1,\frac{n(n-1)}{2}]$. Note that the number of children of $u$ is also at most $\alpha$, i.e., $k\leq \alpha$. On the other hand, the number of elementary intervals $I\in \slice(u)$ is at most $\alpha$ hence $\ell \leq \alpha$ as well. Therefore in this case, the total runtime is $O(\T(n)+\alpha^2 2^\alpha n^3)$. Overall, the total runtime under the weak graph distance would be: 
	
	$$T(n)= O\big((1+\alpha^2 2^\alpha) n^3\big) = O\big(n^3\alpha^2 2^\alpha\big),$$ since $\alpha \geq 1$. The runtime under the strong graph distance would also be: 
	
	$$T(n) = O\big(n^5+\alpha^2 2^\alpha n^3\big).$$
	This completes the proof.
\end{proof}

\section{Subgraph-Restricted Graph-Graph Simplification under the Graph Distance}\label{sec:NP-vertex}

In this section we prove that computing the subgraph-restricted minimum-complexity simplification under $\Gdistance(G,G')\leq \delta$ is NP-hard. 
Throughout the section we assume that the simplified graph is connected as well as the input graph.
We reduce from a specific variant of {\textsc{Max-2SAT}}
problem which is defined as follows: Given a set of variables $X=\{X_1,X_2,\cdots,X_n\}$ and a CNF-SAT formula $\F$ consisting of $m$ disjunctive clauses each with at most two variables (including the negation), find an assignment $\{\true,\false\}^n$ to the variables in $X$ such that the number of satisfied clauses is maximum. The decision version of this problem, $\textsc{Max-2SAT}(X,\F,k)$, takes an integer $k>0$ as an argument and asks whether the exists an assignment $A$ to $X$ under which the number of satisfied clauses in $\F$ is at least $k$ or not. 

\begin{definition}[Variable Graph]
	Given a formula $\F$, a variable graph $G_\F$ is a graph whose nodes are the variables in $\F$. Two nodes are connected by an edge if the respective variables of the nodes belong to the same clause in $\F$.
\end{definition}

A \textsc{Max-2SAT} problem is said to be \emph{\textsc{Bipartite-Max-2SAT}} if the variable graph of its formula forms a bipartite graph.  We first prove that \textsc{Bipartite-Max-2SAT} is NP-hard and then we make our main reduction from \textsc{Bipartite-Max-2SAT} to prove that the problem of interest is NP-hard as well.
%

%
\begin{lemma}
	\textsc{Bipartite-Max-2SAT} is NP-hard.
\end{lemma}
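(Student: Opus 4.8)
The plan is to reduce from general \textsc{Max-2SAT} — whose decision version is well known to be NP-complete — and to \emph{bipartize} the variable graph by a vertex-splitting construction. Let $(X,\F,k)$ be a \textsc{Max-2SAT} instance with $X=\{X_1,\dots,X_n\}$ and clauses $C_1,\dots,C_m$. After a routine preprocessing step we may assume that no clause is a tautology and that every two-literal clause uses two distinct variables: a tautological clause is always satisfied, so we delete it and decrease $k$ by one, and a clause whose two literals share a variable collapses to a unit clause. For each $X_i$ we introduce two fresh variables $Y_i$ and $Z_i$; the $Y_i$'s will be the left part and the $Z_i$'s the right part of the intended bipartition. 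We rewrite each clause $C_\ell$ on $\{X_i,X_j\}$ with $i<j$ by replacing the literal on $X_i$ with the same-sign literal on $Y_i$ and the literal on $X_j$ with the same-sign literal on $Z_j$ (a unit clause on $X_i$ is rewritten on $Y_i$). Then every edge of the resulting variable graph joins some $Y_a$ to some $Z_b$, so the variable graph is bipartite with parts $\{Y_1,\dots,Y_n\}$ and $\{Z_1,\dots,Z_n\}$.

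To force $Y_i$ and $Z_i$ to agree in any near-optimal assignment, for every $i$ we add $M:=m+1$ copies each of the equality clauses $(\neg Y_i\vee Z_i)$ and $(Y_i\vee\neg Z_i)$; these contribute only $Y_iZ_i$ edges and hence keep the variable graph bipartite. The resulting \textsc{Bipartite-Max-2SAT} instance $\F'$ has $2n$ variables and $2Mn+m=O(mn)$ clauses, so the construction is polynomial. I would then prove the equivalence: $\F$ admits an assignment satisfying at least $k$ clauses if and only if $\F'$ admits an assignment satisfying at least $k+2Mn$ clauses.

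The forward direction is immediate: from an assignment $A$ of $X$ set $Y_i=Z_i=A(X_i)$, which satisfies all $2Mn$ equality clauses and satisfies a rewritten clause exactly when $A$ satisfies the corresponding original clause. For the converse, take an assignment $A'$ of $\F'$ satisfying at least $k+2Mn$ clauses; the key step is to show we may assume $Y_i=Z_i$ for all $i$ without lowering the count. If $Y_i\neq Z_i$, then exactly $M$ of the $2M$ equality clauses for index $i$ are satisfied, whereas the variable $Z_i$ occurs in at most $m<M$ rewritten clauses; hence resetting $Z_i:=Y_i$ gains $M$ equality clauses and loses at most $m$ rewritten clauses, a strict net gain, and this change affects neither $Y_j$, $Z_j$, nor the equality gadget of any $j\neq i$. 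Iterating over all $i$ therefore terminates with an assignment in which all $2Mn$ equality clauses hold, so at least $k$ rewritten clauses hold, and then $A(X_i):=Y_i=Z_i$ satisfies at least $k$ clauses of $\F$.

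The main obstacle is precisely this cleanup argument in the converse: one must verify that the iteration is well founded — each fix only enlarges the set of indices with $Y_i=Z_i$ and never decreases the number of satisfied clauses — and that the count ``at most $m$ rewritten clauses contain $Z_i$'' is correct, which is the reason the blow-up factor $M$ must strictly exceed $m$. A secondary, routine point is justifying the preprocessing that eliminates tautologies and degenerate clauses so that the variable graph and its bipartition are well defined. Together these give a polynomial-time reduction from \textsc{Max-2SAT} to \textsc{Bipartite-Max-2SAT}, establishing NP-hardness.
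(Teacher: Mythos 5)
Your reduction is correct, but it takes a genuinely different route from the paper. The paper reduces from \textsc{3-SAT}: each ternary clause $C=(x\vee y\vee z)$ is replaced by a fixed gadget $\F_C$ of 16 two-literal clauses over seven fresh auxiliary variables, arranged so that the gadget's variable graph is bipartite (original variables on one side, auxiliaries on the other) and so that exactly 13 of the 16 clauses are satisfiable when $C$ is satisfied versus at most 12 otherwise; the threshold $k=13m$ then separates the two cases via a counting argument. You instead reduce from \textsc{Max-2SAT} itself and \emph{bipartize} by vertex splitting: each $X_i$ becomes a left copy $Y_i$ and a right copy $Z_i$, clauses are rewritten to always pair a left copy with a right copy, and $M=m+1$ copies of the two equality clauses $(\neg Y_i\vee Z_i)$ and $(Y_i\vee\neg Z_i)$ force agreement in any near-optimal assignment. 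Your approach is more modular and easier to verify --- it replaces the paper's four-way case analysis over assignments to $(x,y,z)$ with a single local-exchange argument whose correctness hinges only on the inequality $M>m$ --- and it preserves the optimization structure of \textsc{Max-2SAT} up to an additive shift of $2Mn$, at the cost of assuming NP-hardness of \textsc{Max-2SAT} as the starting point and of blowing the instance up to $O(mn)$ clauses. The paper's gadget reduction starts from the more primitive \textsc{3-SAT} and keeps the clause count linear in $m$, but its correctness rests on an exhaustive case analysis that is harder to audit. Both establish the lemma; your cleanup argument (each reset of $Z_i:=Y_i$ strictly increases the count, touches no other index, and hence terminates in at most $n$ steps) is sound, and your preprocessing of tautological and degenerate clauses correctly ensures the variable graph and its bipartition are well defined.
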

\begin{proof}
	We prove this by a reduction from \textsc{3-SAT}. Consider an instance of \textsc{3-SAT}, which consists of a CNF formula $\F$ in which all clauses have at most 3 variables.
	We will construct another formula
	for \textsc{Bipartite-Max-2SAT} whose clauses have at most two variables
	and its variable graph is a bipartite graph (see Figure~\ref{fig:treemax2sat}).
	Let $C=(x \vee y \vee z)$ be an arbitrary clause in $\F$.
	We replace $C$
	with 16 clauses yielding the formula $\F_C$ as follows:
	
	\begin{figure}
		\centering
		\includegraphics[width = 0.5\textheight]{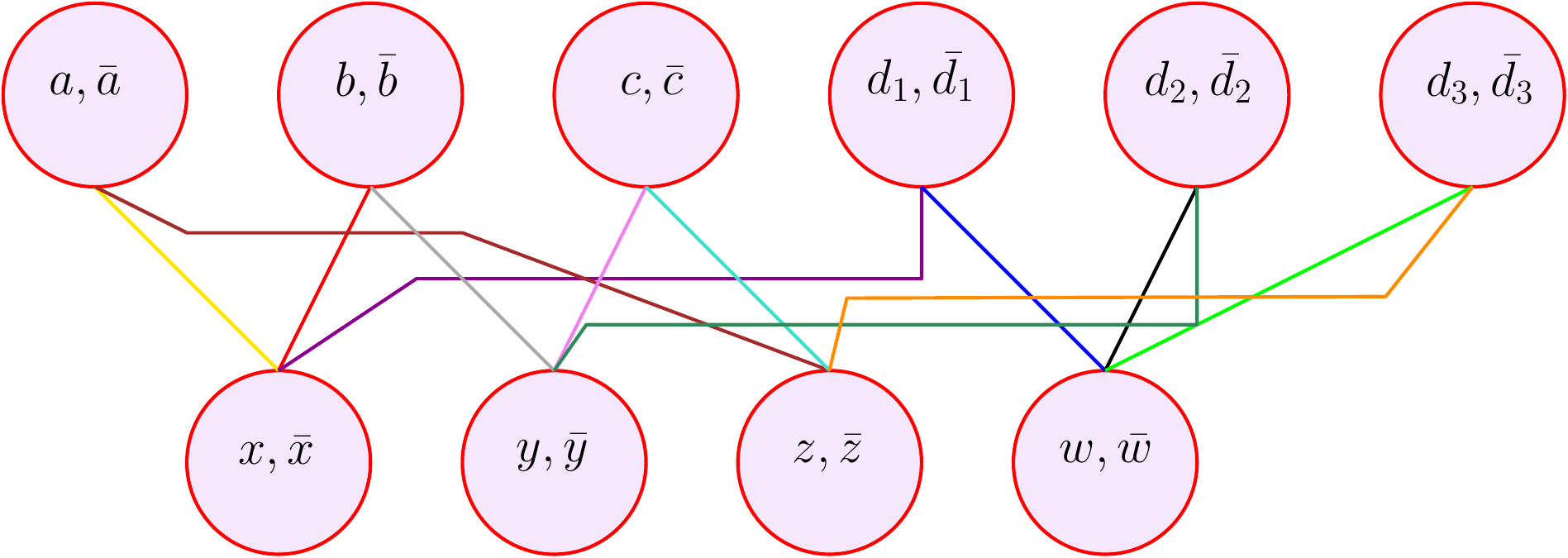}
		\caption[Bipartite variable graph]{The variable graph of the new clause $\F_C$ forms a bipartite graph.}
		
		\label{fig:treemax2sat}
	\end{figure}
	\begin{align*}
	\F_C = &(\bar{x}\vee a) \wedge (\bar{x}\vee \bar{b}) \wedge(\bar{y}\vee b) \wedge(\bar{y}\vee \bar{c}) \wedge(\bar{z}\vee c) \wedge(\bar{z}\vee \bar{a}) \\
	&\wedge(\bar{w}\vee d_1) \wedge(\bar{w}\vee d_2) \wedge(\bar{w}\vee d_3) \\
	&\wedge(x\vee \bar{d_1})  \wedge(y\vee \bar{d_2}) \wedge(z\vee \bar{d_3})   \\ 
	& \wedge(x) \wedge(y) \wedge(z) \wedge(w), \notag
	\end{align*}
	where $a,b,c,w,d_1,d_2,d_3$ are additional variables, and $k=13m$.
	The transformed instance $\F'$ is the conjunction of all $\F_C$ for all clauses $C$ in $\F$.
	%
	%
	We have the following cases:
	
	\begin{itemize}
		
		\item If $x=y=z=\true$, we know that at most 3 of the clauses $(\bar{x}\vee a) \wedge (\bar{x}\vee \bar{b}) \wedge(\bar{y}\vee b) \wedge(\bar{y}\vee \bar{c}) \wedge(\bar{z}\vee c) \wedge(\bar{z}\vee \bar{a})$ can be satisfied, no matter which values are assigned to $a,b,c$. By setting $w=d_1=d_2=d_3= \true$ we have at most 13 clauses of $\F_C$ satisfied. 
		
		\item If $x=y=\true$, $z=\false$, by setting $a=\true$, $b=\true$ and $c=\false$ we have 5 of the clauses $(\bar{x}\vee a) \wedge (\bar{x}\vee \bar{b}) \wedge(\bar{y}\vee b) \wedge(\bar{y}\vee \bar{c}) \wedge(\bar{z}\vee c) \wedge(\bar{z}\vee \bar{a})$ satisfied. Now if we set $d_1 = d_2 = w = \false$ we will have 7 more clauses satisfied. Choosing either $d_3 =\true$ or $d_3=\false$ only yields one more satisfied clause, therefore we have at most 13 clauses in $\F_C$ satisfied. 
		
		\item If $x=\true$, $y=z=\false$, by setting $a=\true$ and $b=\false$ we have all the 6 clauses $(\bar{x}\vee a) \wedge (\bar{x}\vee \bar{b}) \wedge(\bar{y}\vee b) \wedge(\bar{y}\vee \bar{c}) \wedge(\bar{z}\vee c) \wedge(\bar{z}\vee \bar{a})$ satisfied. Now if we set $d_1 = w = \false$ we will have 5 more clauses satisfied. Choosing any values for $d_2$ and $d_3$ only adds two more satisfied clauses. Therefore we have at most 13 clauses satisfied.
		
		\item If $x=y=z = \false$, regardless setting any values to $a,b,c$, we have always all the 6 clauses $(\bar{x}\vee a) \wedge (\bar{x}\vee \bar{b}) \wedge(\bar{y}\vee b) \wedge(\bar{y}\vee \bar{c}) \wedge(\bar{z}\vee c) \wedge(\bar{z}\vee \bar{a})$ satisfied. Now setting $w = d_1 = d_2 = d_3 = \false$ we have at most 6 other clauses satisfied. Therefore, at most 12 clauses of $\F_C$ are satisfied.  
	\end{itemize}
	
	Now suppose there exists an assignment satisfying all $m$ clauses in $\F$. We have to show that $\F'$ has at least $k=13m$ clauses satisfied. If $m$ clauses in $\F$ are satisfied, since each clause $C$ in $\F$ is substituted with 16 clauses $\F_C$ and 13 of them are satisfied (since $C$ is satisfied according to above cases), then $k=13m$ of them are satisfied overall.
	
	Now for the other direction, assume that $m'_s$ clauses in $\F'$ are satisfied and $m'_s\geq k=13m$. Let $m_s$ be the number of clauses satisfied in $\F$.
	%
	For the sake of contradiction, assume that $m_s<m$. Then we have
	$13m\leq m'_s = 13m_s+12(m-m_s) = m_s+12m<13m$ which is a contradiction.
\end{proof}

We are now ready to present our main reduction from \textsc{Bipartite-Max-2SAT} to the minimum-complexity graph-graph simplification problem. 

\paragraph{The reduction:} Set $\delta=1 $ and construct
a graph $G$ from $G_\F$ as follows: $G$ consists of two types of gadgets;
\emph{variable} and \emph{clause} gadgets. A variable gadget $X_i$ has
two vertices $\true_i$ and $\false_i$ representing the two possible
assignments to $X_i$. 

A binary clause gadget of $(X_i\vee X_j)$ connects two
pairs of vertices of two variable gadgets using paths passing through some \emph{hook} vertex $h$  that is located far away from the variables. The hook vertex controls the number of links such that a clause gadget can be simplified using exactly two edges if its corresponding clause is satisfied, and three edges otherwise, under the corresponding assignment to the variables in the clause. There are four interior vertices connected to the hook that are located close enough, by distance $\epsilon$, to each other, where $0<\epsilon \ll \delta = 1$. The \emph{length} of the clause gadget is $L_{ij}$ that is the difference between the $y$-coordinates of the two variable gadgets. Here we also set $L_{ij}\gg 1$.
See Figure~\ref{fig:cgadget} for further illustration on two types of clause gadgets.

\begin{figure}[h]
	\centering
	\includegraphics[width=0.50\textheight]{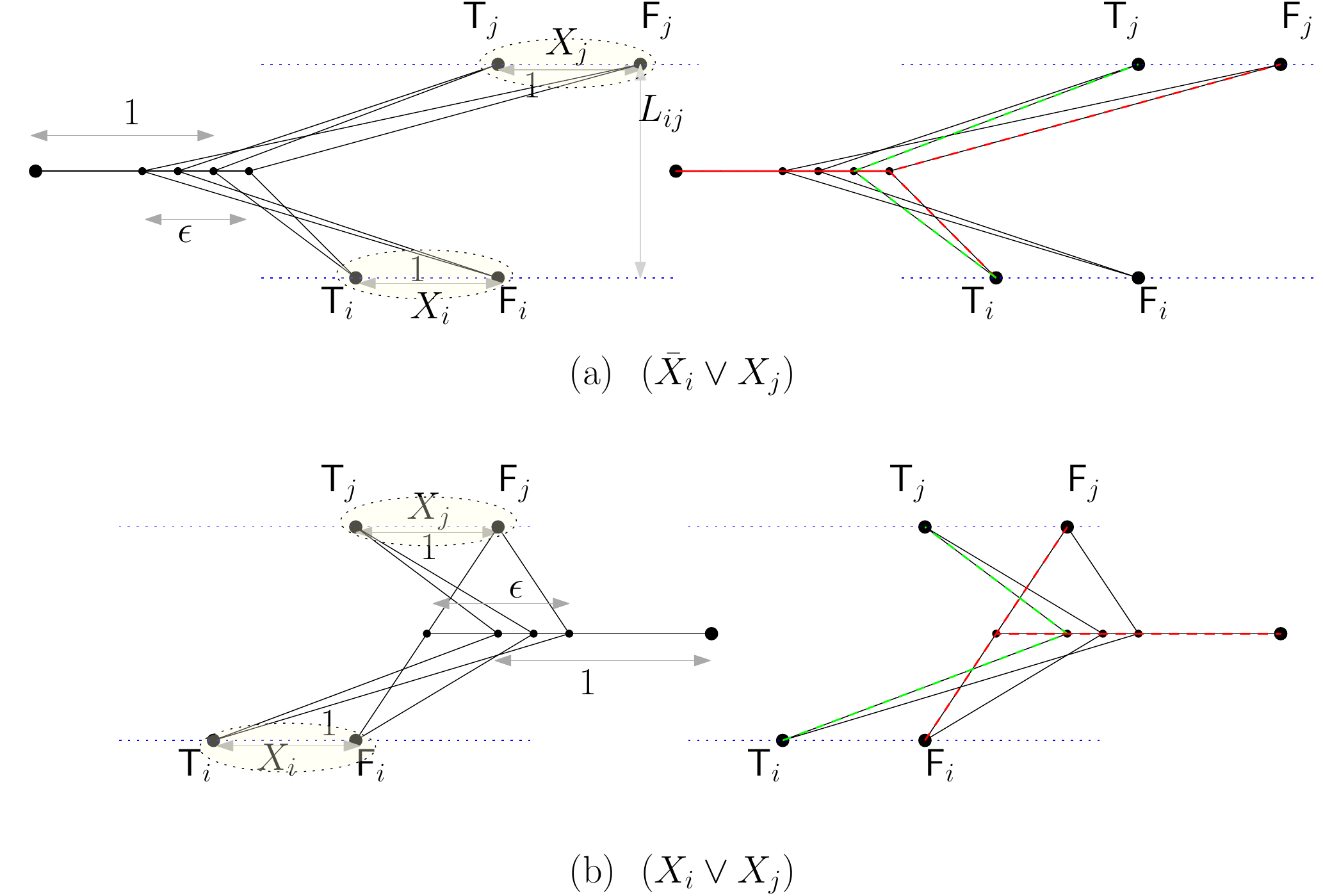}
	\caption[Clause gadgets]{The clause gadgets. (a) is a clause gadget for $(\bar{X}_i\vee X_j)$. Note that a 2-edge path, in green, from $T_i$ to $T_j$ simplifies the gadget while satisfying the corresponding clause. A tree with 3 edges from $\true_i$ to $\false_j$, in red, which corresponds to an assignment that does not satisfy the clause. (b) is a clause gadget for $(X_i\vee X_j)$, thus using the 2-link path from $\false_i$ to $\false_j$ requires and additional link. Here, $\eps>0$ is a sufficiently small real number. The constructions for $({X}_i\vee \bar{X_j})$ and $(\bar{X_i}\vee \bar{X_j})$ are symmetric to case (a) and case (b), respectively.}
	\label{fig:cgadget}
\end{figure}

The clause gadget for every unary clause ($X_i$) is similar to case (b) and involves one additional variable gadget $X_t$. To handle this, we remove the path between $\true_t$ and $\false_i$, as well as the path between $\false_t$ and $\true_i$, from the original clause gadget in case (b). 

Let $G_\F=(\mathcal{X}_1\dot{\cup}\mathcal{X}_2, E)$, so $\mathcal{X}_1\dot{\cup}\mathcal{X}_2$ is a partition of the variables, $E$ be the set of edges/clause gadgets, and $G_\F$ is a bipartite graph. 
Let $\ell_1$ and $\ell_2$ be two lines parallel to the $x$-axis, at vertical distance $L$ from each other. Note that $L = L_{ij}$.
We place the variable gadgets belonging to $\mathcal{X}_1$ on $\ell_1$ and those
belonging to $\mathcal{X}_2$ on $\ell_2$. Since $G_\F$ is bipartite, each clause gadget connects a variable from $\ell_1$ to another variable from $\ell_2$ (see Figure~\ref{fig:biembedding}).

Now, we need to make sure that the edges (links) on the subgraph simplifying a clause gadget can be used only to \emph{cover} the edges of that clause. By ``cover'', we mean that the graph distance from the clause gadget to the subgraph simplifying the clause gadget must be 1. 

For this, we choose $L\gg 1$. The variable gadgets are spaced apart at distance larger than $1$ along $\ell_1$ and $\ell_2$. This way the clause gadgets of the same type (either type (a) or type (b)) do not overlay on top of each other and cannot \emph{cover} for each other. The following lemmas lead to our main theorem in this section. In particular Lemma~\ref{lem:onevertex} below demonstrates that there is a consistency between the literal ($\{
\true,\false\}$) assigned to each variable and the corresponding vertex the simplification selects in the variable gadget.
%

\begin{figure}
	\centering
	\includegraphics[width=0.35\textheight]{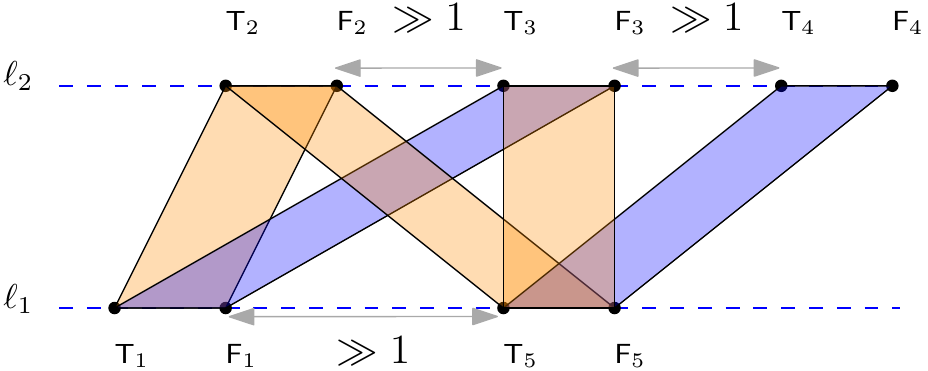}
	\caption[The embedded bipartite variable graph]{A schematic representation of the thickened embedded variable graph induced by $(X_1 \vee X_2) \wedge (X_1 \vee \bar{X}_3) \wedge (\bar{X}_2 \vee \bar{X_5}) \wedge (X_4 \vee \bar{X}_5) \wedge (X_3 \vee X_5)$. For a better presentation of this embedding, the type (a) and type (b) clause gadgets are highlighted in orange and blue strips, respectively.
	}
	\label{fig:biembedding}
\end{figure}

\begin{lemma} \label{lem:onevertex}
	A min-complexity simplified graph selects exactly one vertex per variable gadget.
\end{lemma}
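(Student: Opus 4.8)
The plan is to argue by a careful case analysis on how many vertices of a single variable gadget $X_i = \{\true_i, \false_i\}$ the simplified subgraph $G'$ uses, and to derive a contradiction with either connectivity, the distance bound $\Gdistance(G,G')\leq 1$, or minimality of $G'$. First I would recall the structure: in the embedding, the two vertices $\true_i$ and $\false_i$ lie on the same line ($\ell_1$ or $\ell_2$) at distance larger than $1$ apart, each clause gadget incident to $X_i$ attaches via a path through a far-away hook vertex $h$ with four $\epsilon$-close interior vertices, and every clause path that enters the gadget does so at one of $\true_i$ or $\false_i$. Since $G'$ is subgraph-restricted, each vertex it uses is a vertex of $G$, and each edge it uses is a subpath of $G$; and since the distance is from $G$ to $G'$, the mapping $\mu$ must cover the entire input gadget, in particular the edges of the clause gadgets incident to $X_i$.

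The core of the argument splits into two claims. \emph{(1) $G'$ uses at least one vertex per variable gadget.} Here I would use that $G$ is connected and that the only way to route from one side of the bipartition to the other is through the clause-gadget paths, which terminate at $\true_i/\false_i$; if $G'$ used neither vertex of $X_i$, then for any clause $C$ incident to $X_i$, the input path of $C$ from $X_i$'s side to the hook has Fr\'echet distance to its image under $\mu$ at least the distance from $\{\true_i,\false_i\}$ to the nearest used vertex, which (since variable gadgets are spaced $>1$ apart and the hook is far away) exceeds $1$ — contradicting $\Gdistance(G,G')\leq 1$. \emph{(2) $G'$ uses at most one vertex per variable gadget.} This is the harder direction and I expect it to be the main obstacle. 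Suppose $G'$ uses both $\true_i$ and $\false_i$. Because $G'$ is connected, there must be a path in $G'$ between them; since $\|\true_i - \false_i\| > 1$ and any such connecting path must use vertices of $G$, I would argue that this path necessarily runs out to some hook vertex and back (there is no short connection in $G$ between the two literal-vertices by construction), costing at least two extra edges beyond what a one-vertex solution needs for the same clause coverage. Then I would exhibit, from any such two-vertex solution, a modified solution using a single vertex of $X_i$ with strictly fewer edges that still realizes $\Gdistance(G,G')\leq 1$ — by rerouting every clause path incident to $X_i$ to whichever single literal-vertex still lets the $\epsilon$-cluster at the hook absorb the Fr\'echet slack (this is exactly the mechanism that makes a satisfied clause cost two edges). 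This contradicts minimality, completing the proof.

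The delicate points I would need to nail down are: (a) that in $G$ there is genuinely no path of ``cost'' less than going through a hook between $\true_i$ and $\false_i$, so that keeping both vertices is strictly wasteful — this follows from the spacing ($>1$) of variable gadgets along $\ell_1,\ell_2$ and from placing hooks far away and off all lines supporting pairs of variable vertices (as in the construction), so distinct gadgets cannot ``cover'' for one another; and (b) that rerouting all incident clause paths to one literal-vertex does not break the distance bound for \emph{those} clauses — which is where the four interior $\epsilon$-close vertices at the hook are used, since they provide the monotone reparametrization freedom needed so that a $2$-edge path still $1$-covers the gadget. I would also need the spacing argument to ensure these local modifications at gadget $X_i$ do not interfere with coverage of any other gadget. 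Once (a) and (b) are in place, both inequalities give exactly one vertex per variable gadget, as claimed.
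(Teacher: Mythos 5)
Your first claim (that $G'$ must use at least one vertex of each variable gadget) matches the paper's argument essentially verbatim: with $L\gg 1$ and the gadgets spaced more than $1$ apart, a gadget with no selected vertex cannot be covered within distance $1$. The divergence, and the gap, is in your second claim. The paper does \emph{not} argue that using both $\true_i$ and $\false_i$ is merely wasteful; it argues that it is \emph{infeasible}, i.e., it already forces $\Gdistance(G,G')>1$. Concretely: since $G$ has no edge between $\true_i$ and $\false_i$, the two paths $\P_{ij}$ (leaving $t_i$) and $\P_{ik}$ (leaving $f_i$) of $G'$ can only be connected through some other variable gadget $X_w$; then, whichever of $t_i$ or $f_i$ the mapping $\mu$ sends $\true_i$ to, there is an edge of $G$ incident to $\true_i$ (resp.\ $\false_i$) whose image path in $G'$ is forced to detour through a vertex ($a$ or $b$) lying at distance more than $1$ from that edge, so the Fr\'echet distance of that single edge to its image already exceeds $1$. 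This is a local, per-edge geometric contradiction with the definition of the graph distance.

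Your replacement — an exchange/minimality argument — has a step that is not justified and that I do not see how to repair cheaply: you assert that the connecting path between $\true_i$ and $\false_i$ ``costs at least two extra edges beyond what a one-vertex solution needs for the same clause coverage.'' But connectivity forces that path to run through other clause gadgets, and those edges may \emph{already} be required to cover those other gadgets; nothing in your argument exhibits an edge that can actually be deleted after rerouting. Without a strict edge saving you cannot contradict minimality, and without that you also cannot conclude the ``exactly one vertex'' property for \emph{every} optimal $G'$, which is what the subsequent counting lemma ($|E(G')|=2m_s+3(m-m_s)$, and reading an assignment off the selected vertices) needs. In addition, your step (b) — that rerouting all clauses incident to $X_i$ to a single literal vertex preserves $\Gdistance(G,G')\le 1$ globally — is asserted but not proved, whereas in the paper's route no such global re-verification is needed because the contradiction is obtained directly from the distance bound on the assumed two-vertex solution. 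I would redo the second half along the paper's lines: fix the graph mapping $\mu$ witnessing $\Gdistance(G,G')\le 1$, locate the edge of $G$ at $\true_i$ or $\false_i$ whose image must traverse the detour, and bound its Fr\'echet distance from below by the distance from the detour vertex to that edge.
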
  
\begin{proof}
	We use a proof by contradiction. Suppose $G'$ is a minimum-complexity simplified graph that does not choose exactly one vertex per variable gadget. Let $X_i$ be a variable in $\F$ where $G'$ chooses no vertex from its gadget in $G_\F$. By construction it immediately follows that $\Gdistance(G,G')>1$ since the vertices of the variable gadget $X_i$ (either $\true_i$ or $\false_i$) cannot be mapped to anywhere in $G'$ with $L\gg 1$.  Now suppose $G'$ chooses two vertices from $X_i$ and w.l.o.g. let $X_j$ and $X_k$ be two
	variable gadgets adjacent to $X_i$, and $\mu$ is a graph mapping realizing $\Gdistance(G,G')\leq1$ (see Figure~\ref{fig:onevertex}).  
	To distinguish between the vertices of $G'$ and $G$ in the case that they lie on each other, we give their vertices different names to identify which vertex belongs to which graph. Note that since this is a vertex-restricted case, thus $V(G') \subset V(G)$. Now let $t_i, f_i \in V(G')$ be two vertices of $G'$, where $t_i= \true_i$ and
	$f_i= \false_i$. Note that $t_i$ cannot be connected to $f_i$ by some edge because it is against the subgraph-restrictedness.
	Correspondingly,
	$G'$ selects two paths $\P_{ij}$ and $\P_{ik}$ connecting $\true_i$ to
	some vertex of $X_j$ and $\false_i$ to some vertex of $X_k$,
	respectively.  Since $G'$ selects two different vertices of $X_i$ variable gadgets, $\P_{ij}$ and $\P_{ik}$ are not connected at $X_i$ gadget. This means that there must be a variable gadget $X_w$ such that $\P_{ij}$ and $\P_{ik}$ are connected to each other through $X_w$ because we want the simplified graph be connected as well as the input graph. W.l.o.g. suppose $\P_{ij}$ is the one that is connected to a vertex $\true_w$ of $X_w$ and is placed to the right of $X_i$.
	Thus, there are two following cases:
	\begin{itemize}
		\item  $\mu(\true_i) = t_i$. Now consider the edge $\segment{\true_ib} \in E(G)$. Clearly, $\Frechet\big(\segment{\true_ib}, \P_{ij}[t_i, \mu(b)]\big)>1$. This is because the distance between $a$ and any point on $\segment{\true_ib}$ is larger than 1 by construction. Therefore, $\Gdistance(G,G')>1$.
		
		\item $\mu(\true_i)= f_i$. This time consider the edge $\segment{\false_ia} \in E(G)$. Clearly, $\Frechet\big(\segment{\false_ia}, P_{ik}[f_i, \mu(a)]\big)>1$. This is because the distance between $b$ and any point on $\segment{\false_ia}$ is larger than 1 by construction. Therefore, $\Gdistance(G,G')>1$.
		
	\end{itemize} 
	Thus, we have a contradiction in both cases above and $G'$ selects exactly one vertex of $X_i$.
\end{proof}

\begin{remark}
	The proof above may not be credible for the case under the traversal distance from $G$ to $G'$, since both $t_i$ and $f_i$ can be chosen and still there can be a traversal on $G$ and $G'$ under which $\Traversal(G,G')\leq 1$.  	
\end{remark}

\begin{figure} 
	\centering
	\includegraphics[width=0.35\textheight]{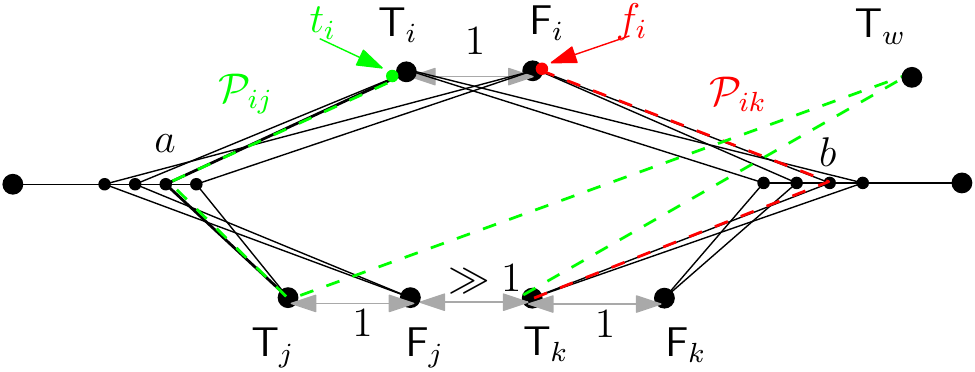}
	\caption[Proof of the consistency of the construction]{Choosing two vertices $\true_i$ and $\false_i$ by $G'$ results in having paths $\P_{ij}$ and $\P_{ik}$, highlighted in green and red dashed lines, respectively, whose graph distances are larger than 1 from their respective clauses.}
	\label{fig:onevertex}
\end{figure}

\begin{lemma} 
	
	Let $m$ be the number of clauses. An assignment satisfying at least $k<m$ clauses of \textsc{Bipartite-Max-2SAT} exists if and only if $G'$ with $|E(G')|+|V(G')|\leq 6m-2k+1$ exists.
	
\end{lemma}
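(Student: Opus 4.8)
The plan is to show that any simplification $G'$ with $\Gdistance(G,G')\le\delta=1$ must, up to its complexity, \emph{encode} a truth assignment to the variables of $\F$, where a clause satisfied by this assignment is covered by a $2$-edge route through its hook while a clause that is not satisfied costs one extra edge and one extra interior vertex; the equivalence then follows by comparing the number of satisfied clauses with $k$.

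\emph{Assignment $\Rightarrow$ $G'$.} Suppose $A\in\{\true,\false\}^n$ satisfies $s\ge k$ clauses. Construct $G'$ by keeping in each variable gadget $X_i$ the single vertex $A(X_i)\in\{\true_i,\false_i\}$ and by adding, inside each clause gadget, the route determined by the two chosen variable vertices: the $2$-edge path through the hook when $A$ satisfies the clause, and the $3$-edge tree otherwise (for a unary clause $(X_i)$ proceed as for a type-(b) gadget, choosing the vertex of its auxiliary variable gadget $X_t$ so that the unary gadget is ``seen'' as satisfied exactly when $A$ satisfies $(X_i)$). A direct check shows that $G'$ is connected and that the graph mapping $\mu\colon G\to G'$ that sends each clause gadget onto its selected route — absorbing the gadget's geometric slack into the $\epsilon$-ball around its hook, which by construction is within distance $1$ of every edge of the gadget — witnesses $\Gdistance(G,G')\le 1$. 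Counting the $n$ chosen variable vertices, $2$ edges and $1$ hook vertex for each of the $s$ satisfied clauses, $3$ edges and $2$ interior vertices for each of the $m-s$ unsatisfied ones, and the constant overhead of stitching the routes together, a direct tally gives $|E(G')|+|V(G')|\le 6m-2s+1\le 6m-2k+1$.

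\emph{$G'$ $\Rightarrow$ assignment.} Conversely, let $G'\subseteq G$ be connected with $|E(G')|+|V(G')|\le 6m-2k+1$. Passing to a minimum-complexity simplification only decreases this quantity, so we may assume $G'$ is of minimum complexity, and \lemref{lem:onevertex} applies: $G'$ selects exactly one vertex per variable gadget, which defines an assignment $A$. The heart of the argument is a per-gadget lower bound. First, inside every clause gadget $G'$ must contain a connected subgraph covering each edge of the gadget within Fr\'echet distance $1$; this is where the spacing is used — because $L=L_{ij}\gg 1$, because the variable gadgets are placed more than $1$ apart along $\ell_1$ and $\ell_2$, and because the hooks lie in general position, no edge lying even partly outside a given clause gadget can be within distance $1$ of an edge of that gadget, so covering it is a purely local task. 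Second, any such covering subgraph uses at least $2$ edges and $1$ interior vertex, and strictly more — a tree with at least $3$ edges — whenever the two variable vertices that $G'$ has selected for this clause do not satisfy it, because the hook and its four $\epsilon$-close interior vertices are arranged (see \figref{fig:cgadget}) so that a $2$-edge route between two variable vertices stays within distance $1$ of the whole gadget only for a satisfying pair. Summing over the $m$ clauses (with the shared variable vertices, the unary and auxiliary gadgets, and the stitching overhead counted exactly as in the first direction) gives $|E(G')|+|V(G')|\ge 6m-2s_A+1$, where $s_A$ is the number of clauses $A$ satisfies; against the hypothesis this forces $s_A\ge k$, so $A$ satisfies at least $k$ clauses.

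\emph{Main obstacle.} The counting is routine; the real content is the geometric lemma underpinning the second part of the per-gadget bound: (i) for a \emph{satisfying} pair of selected variable vertices the advertised $2$-edge route is within Fr\'echet distance $1$ of \emph{every} edge of the clause gadget, and (ii) for a \emph{non-satisfying} pair no subgraph of the gadget cheaper than a $3$-edge tree covers it. Both hinge on the exact choice of $\epsilon$, of $L_{ij}\gg 1$, and of the $y$-coordinates and non-collinear placement of the hook and the four interior vertices. A closely related point that must be argued with care is that no ``global'' savings are possible — a subgraph of $G'$ that threads through several clause gadgets cannot cover any one of them more cheaply than a dedicated in-gadget route — which again reduces to the spacing bound $L\gg 1$ and the general-position placement of the hooks, but has to be checked for the near-degenerate configurations (two parallel clause gadgets of the same type, two clause gadgets sharing a variable gadget, and the auxiliary variable gadgets of unary clauses). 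This case analysis, together with verifying that the chosen parameters genuinely realize (i) and (ii), is where essentially all of the work lies.
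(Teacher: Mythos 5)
Your proposal follows essentially the same route as the paper: the forward direction builds $G'$ with a $2$-edge route per satisfied clause and a $3$-edge tree per unsatisfied one, and the reverse direction invokes Lemma~\ref{lem:onevertex} to read off an assignment and counts edges as $|E(G')|=2m_s+3(m-m_s)=3m-m_s$. The only notable difference is in the vertex bookkeeping: you tally variable, hook, and interior vertices per gadget (a sum that you assert rather than evaluate to $6m-2s+1$), whereas the paper simply observes that $G'$ is a tree, so $|V(G')|=|E(G')|+1$ and the total is $2(3m-k)+1=6m-2k+1$ --- the cleaner of the two accounts.
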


\begin{proof}
	We first argue on the number of edges and then complete our proof on the total number of edges and vertices. We show  that an assignment satisfying at least $k$ clauses exists if and only if a  simplified graph $G'$ with $|E(G')|\leq 3m -k$ exists. Let $A$ be an assignment under which at least $k$ clauses of $\F$ are
	satisfied. Now let $G'$ be a minimum-edge simplified graph whose number of edges is $|E(G')|$. Observe that since $A$ assigns a literal to $X_i$, $G'$ can select exactly one
	vertex ($\true_i$ or $\false_i$) that is assigned to $X_i$ under
	$A$. Let $m_s$ be the number of satisfied clauses. Since $m_s\geq k $,
	by construction we have:
	
	$$ |E(G')| = 2m_s+3(m-m_s) = 3m-m_s \leq 3m-k,$$
as desired. 	
	Now for the other direction, let $G'$ be an optimal simplified graph whose number of edges is
	$|E(G')| \leq 3m-k$. Following Lemma~\ref{lem:onevertex}, $G'$ only chooses
	one vertex per variable gadget. This implies the vertices of $G'$
	obtain a set of assignments $A$ to variable set $X$ of $\F$ in
	the \textsc{Bipartite-Max-2SAT} problem. Observe that $G'$ takes
	either two-edge or three-edge tree per clause gadget.  Let $S_2$ and
	$S_3$ be the set of all such two-edge and three-edge trees over all
	clause gadgets, respectively. Let $m_s$ and $m_u$ be the number of
	satisfied and unsatisfied clauses in $\F$, respectively. In other words, $|S_2|=m_s$
	since every two-edge tree in $S_2$ is satisfying the respective clause 
	and $|S_3|=m_u$ since every three-edge tree in $S_3$
	is unsatisfying the respective clause. Now we have: $|E(G')| =
	2|S_2|+3|S_3|$, and $m=m_s+m_u$. We can obtain that $m_u=m-m_s$.  On
	the other hand $|S_2|=m_s$ and $|S_3|=m_u$, thus it follows that: $|E(G')|=
	2m_s+3m_u$. Thus we have $m_u = (|E(G')| -2m_s)/3$.  Now we can set:
	
	$$m-m_s=(|E(G')|-2m_s)/3 \implies 3m-3m_s \leq 3m-k -2m_s \implies m_s \geq k. $$	
	Therefore, $A$ satisfies at least $k$ clauses of $\F$.
	Realize that in our construction each clause is simplified by a tree, hence the number of edges and vertices of $G'$ differs only by 1  per clause gadget. This implies that $|V(G')| = |E(G')|+1 = 3m-k+1$. Thus, $|E(G')|+|V(G')|\leq 3m-k+3m-k+1 = 6m-2k+1$ if and only if $A$ satisfies at least $k$ clauses of $\F$.
\end{proof}

\begin{theorem}  \label{thm:subgraph-graph-graph}
	The subgraph-restricted min-complexity graph-to-graph simplification under $\Gdistance(G,G')\leq \delta$ is NP-hard.
\end{theorem}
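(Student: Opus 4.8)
The plan is to package the machinery already built in this section into a polynomial-time many-one reduction from \textsc{Bipartite-Max-2SAT}, which the lemma above shows to be NP-hard. Given an instance $(X,\F,k)$ with $|X|=n$ and $m$ clauses, I would fix $\delta=1$ and output the graph $G$ obtained from the variable graph $G_\F$ exactly as constructed above: one variable gadget (with vertices $\true_i,\false_i$) per variable, one clause gadget --- of type (a), type (b), or the unary variant --- per clause, the variable gadgets of $\mathcal{X}_1$ placed on $\ell_1$ and those of $\mathcal{X}_2$ on $\ell_2$, and the hook vertices placed far off.

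First I would check that the construction runs in polynomial time. The graph $G$ has $O(n+m)$ vertices and edges, since every variable gadget is of constant size and every clause gadget has only a constant number of vertices (the hook $h$, the four $\eps$-close interior vertices, and the endpoints attached to two variable gadgets). Geometrically one only needs to take $L=L_{ij}\gg 1$, space the variable gadgets at pairwise distance larger than $1$ along $\ell_1$ and $\ell_2$, and place the hooks far enough away that no two clause gadgets of the same type overlay (so that links simplifying one gadget cannot cover another), exactly as described in the reduction above --- all of which can be realized on a grid of polynomial size, giving coordinates of polynomially bounded bit-length and a total construction time of $\poly(n+m)$.

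Correctness is then a direct consequence of the two lemmas of this section. \lemref{lem:onevertex} ensures that any min-complexity subgraph-restricted $G'$ with $\Gdistance(G,G')\le 1$ picks exactly one vertex per variable gadget, hence encodes a truth assignment to $X$, and conversely any assignment gives rise to such a $G'$; the lemma immediately preceding this theorem then shows that an assignment satisfying at least $k$ clauses exists if and only if there is a feasible subgraph-restricted $G'$ with $|E(G')|+|V(G')|\le 6m-2k+1$. Since ``minimum complexity'' here means minimizing $|E(G')|+|V(G')|$, any algorithm solving the subgraph-restricted min-complexity graph-to-graph simplification under $\Gdistance(G,G')\le 1$ would decide $(X,\F,k)$ by comparing the returned optimum with $6m-2k+1$. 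Hence the problem is NP-hard.

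The delicate part is not the counting --- that is already discharged by the lemmas --- but making sure the clause gadgets behave as advertised under the \emph{directed strong graph distance} from $G$ to $G'$: that a satisfied clause is coverable by a $2$-edge path while an unsatisfied one truly forces a $3$-edge path, that the hook prevents one clause gadget's links from covering a neighbouring gadget, and that the global connectivity requirement on $G'$ cannot be exploited to save edges by routing one shared path through several gadgets at once. These are precisely the behaviours that the $L\gg 1$ spacing, the hook placement, and the argument of \lemref{lem:onevertex} are designed to enforce, so before invoking the lemmas I would make sure each of them is airtight.
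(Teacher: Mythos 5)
Your proposal is correct and follows essentially the same route as the paper: the theorem is obtained by combining the NP-hardness of \textsc{Bipartite-Max-2SAT}, the gadget construction with $\delta=1$, \lemref{lem:onevertex}, and the counting lemma showing the $6m-2k+1$ equivalence, which is exactly how the paper discharges it (the paper states the theorem without a separate proof, treating the preceding lemmas as the proof). Your added remarks on polynomial-time constructibility and on verifying the gadgets' behaviour under the directed strong graph distance are sensible sanity checks but do not change the argument.
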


\section{Leaf-Restricted Tree-Tree Simplification from Input to Output}
We now consider the case where the leaves of $T'$ are identical to subset of the leaves of $T$. We show that given a tree $T$ in $\mathbb{R}^2$ and a threshold $\delta$, it is NP-hard to compute a tree $T'$ with a minimum complexity such that $\Gdistance(T,T')\le\delta$, where every leaf in $V(T')$ is identical to some leaf in $V(T)$. Recall that the construction proposed in Section~\ref{sec:vertexrestricted} works for the vertex-restricted case only under the traversal distance between the two trees. Similar to Theorem~\ref{thm:tree-tree-traversal} we reduce from the (\textsc{MDSUDG}) problem. The reduction is a special case of the construction  presented in Section~\ref{sec:vertexrestricted}. In addition, with a similar proof we can show that the problem under traversal distance is NP-Hard since unlike the proof (and construction) in Theorem~\ref{thm:tree-tree-traversal} we do not use a bottleneck path in our reduction anymore. Unlike that construction we do not use bottleneck paths but only straight-line edges directly connecting the leaves to $w$. The reduction is very straightforward as follows:
Given a unit disk graph $G=(V,E)$ with $P =V(G)$, let $\delta=1$ and $B$ be the smallest bounding box of unit disks around the points in $P$. Construct  $T$ as follows: 

$$V(T)=P\cup \{w\},~\mbox{and}~E(T)=\{\segment{pw}\mid p\in P\},$$ for some point $w$ which lies far enough from $B$.
We have the following theorem:


\begin{theorem} \label{thm:leaf-restricted-tree}
	Let $\delta>0$. Given a tree $T$ in $\Reals^2$, computing  a leaf-restricted min-complexity tree-tree simplification $T'$ under $\Gdistance(T,T')\leq \delta$ and $\Traversal(T,T')\leq \delta$ is NP-Hard.
\end{theorem}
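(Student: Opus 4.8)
The plan is to reduce from the minimum dominating set of unit disk graphs (\textsc{MDSUDG}) --- the same source used in \thmref{thm:tree-tree-traversal} --- but now through the star $T$ described above rather than through bottleneck paths. Given a unit disk graph $G$ on a point set $P$ with $|P|=n$, I set $\delta=1$, let $B$ be the bounding box of the unit disks around $P$, and take $T$ to be the star with center $w$ and leaf set $P$, where $w$ is placed far away from $B$ in a \emph{sufficiently generic} direction. The goal is to prove: $G$ has a dominating set of size $k$ if and only if there is a leaf-restricted simplified tree $T'$ of complexity at most $2k+1$ (equivalently $|E(T')|\le k$, since a tree on $t$ vertices has $t-1$ edges, and in our construction $T'$ is a tree) satisfying $\Gdistance(T,T')\le 1$, and separately one with $\Traversal(T,T')\le 1$; since \textsc{MDSUDG} is NP-hard, this yields the claim.

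For the ``only if'' direction, given a dominating set $S$ with $|S|\le k$, I take $T'$ to be the star centered at $w$ with leaf set $S$ and exhibit a graph mapping $\mu$: set $\mu(w)=w$; for $p\in S$, $\mu(p)=p$ and $\mu(\segment{pw})$ is the identical edge of $T'$; for $p\notin S$, pick $s\in S$ dominating $p$, let $\mu(p)$ be the point of $\segment{ws}$ nearest to $p$ (so $\|\mu(p)-p\|\le\|s-p\|\le 1$, because $\segment{ws}$ passes through $s$), and let $\mu(\segment{pw})$ be the subsegment of $\segment{ws}$ from $\mu(p)$ to $w$. Since $\segment{pw}$ and $\segment{\mu(p)\,w}$ share the endpoint $w$ and their other endpoints are within distance $1$, their \Frechet\ distance is at most $1$, so $\Gdistance(T,T')\le 1$; and because the traversal distance never exceeds the graph distance, also $\Traversal(T,T')\le 1$. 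In both cases $|E(T')|=|S|\le k$.

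For the ``if'' direction, suppose $T'$ is leaf-restricted with complexity at most $2k+1$ (hence $|E(T')|\le k$) and $\Gdistance(T,T')\le 1$; the traversal case is analogous, using that a traversal of $T$ visits every leaf $p\in P$, at which moment the dog lies within $1$ of $p$ on $T'$. I would first argue that $w\in V(T')$ --- the image of the vertex $w$ must lie within $1$ of $w$, which is far from $B$, whereas every edge of $T'$ with both endpoints in $P$ lies in the vicinity of $B$ --- and that, since $T'$ is leaf-restricted and $w$ is not a leaf of $T$, the vertex $w$ has degree at least $2$ in $T'$; consequently $S:=V(T')\cap P$ has size at most $|E(T')|\le k$. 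Next, for each $p\in P$, the vertex $p$ maps to a point within distance $1$ of $p$ on some edge of $T'$; a length argument (the long edge $\segment{pw}$ of $T$ must map to a path that traverses a long edge of $T'$, while a short edge $\segment{ss'}$ of $T'$ between non-adjacent points of $P$ would force that path through a point more than $1$ away from $\segment{pw}$, which is impossible) shows this point lies within distance $1$ of some long edge $\segment{ws}$ with $s\in S$, or else $p\in S$. The concluding geometric lemma --- \emph{for $w$ sufficiently far and in a generic direction, $\segment{ws}$ lies within distance $1$ of a point $p\in P$ only if $p$ is $s$ or a neighbor of $s$ in $G$} --- then gives that $S$ dominates $P$, so $G$ has a dominating set of size at most $k$.

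The main obstacle is exactly this last geometric lemma. Restricted to a neighborhood of $B$, the long edge $\segment{ws}$ behaves like a ray out of $s$ in the direction of $w$, and such a ray can pass within distance $1$ of a point $p$ far from $s$ whenever $p$ lies nearly in that direction; the direction of $w$ must therefore be chosen to avoid a finite family of ``bad cones'' determined by the pairwise difference vectors of $P$ (which in turn may require first normalizing the \textsc{MDSUDG} instance so that non-adjacent points are comfortably separated --- a property one may assume for the standard hardness constructions). Pinning down that, with such a placement, the ``within $\delta$ of a long edge'' relation coincides with the adjacency relation of $G$, together with the structural step that rules out non-star simplifications of smaller complexity (where the leaf-restriction is essential, since it forbids $w$ from being a free ``hub'' leaf), is where the real work lies; the remaining bookkeeping mirrors the corresponding steps in the proof of \thmref{thm:tree-tree-traversal}.
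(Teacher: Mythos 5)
Your construction is exactly the paper's: the same reduction from \textsc{MDSUDG}, the same star $T$ with hub $w$ placed far from the bounding box $B$ of the unit disks, and the same correspondence between dominating sets of size $k$ and simplified trees on $k+1$ vertices. Your forward direction (including deducing the traversal bound from the fact that the traversal distance is at most the graph distance) matches the paper's, modulo the harmless variation of sending $p$ to the nearest point of $\segment{ws}$ rather than to $s$ itself. The genuine gap is in your backward direction. You correctly observe that if $\mu(p)$ may be an \emph{interior} point of a long edge $\segment{ws}$ of $T'$, then $\|p-\mu(p)\|\le 1$ only certifies that $p$ is within distance $1$ of the segment, which near $B$ behaves like a ray out of $s$ toward $w$ and can pass within distance $1$ of points of $P$ far from $s$. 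The cone-avoidance lemma you propose to repair this is precisely what you do not prove, and it is not clear it can be proved for an arbitrary \textsc{MDSUDG} instance without quantitative separation assumptions: points of $P$ at distance just over $1$ from $s$ generate wide ``bad cones'' at $s$, so a single generic direction for $w$ need not avoid them all. As written, the ``if'' direction is therefore incomplete.

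The paper closes this step by a different (definitional) route: it reads the leaf-restriction as constraining the \emph{mapping}, not merely the placement of the vertices of $T'$, so that $\mu$ must match each leaf $p$ of $T$ to a leaf $\mu(p)$ of $T'$. Then $\mu(p)\in V(T')\cap P$ and $\|p-\mu(p)\|\le 1$ immediately exhibits a dominating vertex, and $|V(T')\cap P|\le k$ because one vertex of $T'$ must lie near $w$, outside $B$; no genericity of $w$'s direction is needed. (This is the same convention the paper uses in Section~\ref{sec:leaf-restricted-tree}, where the designated leaves are required to be identical \emph{and mapped} to one another, although it sits uneasily with the placement-only definition given in the classification section.) If you adopt that reading, your argument collapses to the paper's and the geometric lemma evaporates; if you keep the placement-only reading, you must actually prove the cone-avoidance lemma, and you should note that the traversal-distance half would need it as well, since under the traversal distance the dog's position near $p$ need not be a vertex of $T'$ at all.
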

\begin{proof}
	Let $k>0$ be an integer as the decision parameter to the decision version of \textsc{MDSUDG} whose instance is a unit disk graph $G=(V,E)$. We show that there exists a dominating set $S = \{s_1,\cdots, s_m\}$ with $m\leq k$ to the (\textsc{MDSUDG}) if and only if there exists a leaf-restricted simplification $T'$ for $T$ under  $\Gdistance(T,T')\leq 1$ with at most $k+1$ vertices. 
	
	$\Rightarrow$: Let $S= \{s_1,\dots,s_m\}$ with $m\leq k$ be a dominating set to $G$. Set $V(T')=\{w,s_1,\dots,s_k\}$, and $E(T')=\{\segment{s_iw}\mid 1\le i\le m\}$. Consider a mapping $\mu$ such that maps $\mu(w)=w$ and $\mu(p)= p'$ where $p\in V(T)$ and $p'\in V(T')$. Note that such that $p$ is contained in the unit disk centered at $p'$. Now each edge $\segment{xp}\in E(T)$ is mapped to an edge $\segment{wp'}\in E(T')$, and thus $\Gdistance(T,T')\leq 1$.
	
	$\Leftarrow$: Let $T'$ be a simplified tree of number of vertices at most $k+1$. Let $\mu$ be the mapping realizing $\Gdistance(T,T')\leq 1$. Let $C=\{\mu(p)\mid p\in V(T)\setminus\{w\}\}$, then $C\subseteq V(T')$ because $T'$ is a leaf-restricted simplification and thus $\mu$ matches each leaf $p$ of $T$ to a leaf $\mu(p)$ of $T'$. Also, $C$ contains at most $k$ vertices because $w$ is far enough from $B$, so there must be at least one vertex of $T'$ which lies outside of $B$ such that none of the points of $P$ is matched to it. Now consider the set centers $S = \{s_1,\cdots, s_m\}$ of the unit disks around $V(T')$ with $m\leq k$. Clearly, $S$ is a dominating set for $G$. The argument for traversal distance is similar. 
\end{proof}

\section{An Algorithm for Leaf-Restricted Tree-Tree Simplification from Output to Input}\label{sec:leaf-restricted-tree}
As we have seen by now, the restriction on the leaves of $T'$ and $T$ did not change the difficulty of the problem. In this section we aim to flip the direction of the distance applied between the two trees. We consider the leaf-restricted tree-tree simplification under $\Gdistance(T',T)\leq \delta$. For a given subset $l=\{l_1,\cdots, l_k\}$ of leaves  in $T$ with $k\geq 1$, we require $T'$ has $k$ leaves identical and mapped to leaves of $T$ in $l$ that are given as part of the input. Note that $T$ is rooted, so the root $r$ is given as an arbitrary vertex of $T$.
In this section, $T'$ selects its vertices from a subset of $V(T)$ along with the given leaf set $l$. 
Without any restriction on the leaves, the problem is trivial, as $T'$ could consist of a single point only.

Similar to Section~\ref{sec:FPT-tree-tree}, let $G$ be the complete graph induced by $V(T)$. 
For a vertex $u\in V(T)$, let $\I_u$ be the set of all elementary intervals $I \in \slice(u)$.
Given a leaf set $l$ in $T$, we have the following observation:

\begin{observation} \label{obs:singleleaf}
	Given a tree $T$, leaf set $l$ with root $r$ and a leaf $l_1 \in l$, a solution to the leaf-restricted min-edge tree simplification from $T'$ to $T$ when $k=1$ is a reachable path across $\FSD(G,T)$ starting from an elementary interval $I \in \slice(r)$ to some elementary interval $I' \in \slice(l_1)$. 
\end{observation}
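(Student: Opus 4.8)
The plan is to show that, when $k=1$, the constraints on $T'$ --- that its root be matched to the root $r$ of $T$ and that it have a designated leaf identified with $l_1$ --- force a minimum-complexity $T'$ to be a \emph{simple path}, and then to identify such a path with a reachable path through $\FSD(G,T)$ that runs from $\slice(r)$ to $\slice(l_1)$.

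First I would pin down the path structure. Let $T'$ be any feasible solution with $\Gdistance(T',T)\le\delta$ realized by a graph mapping $\mu$ with $\mu(\text{root of }T')=r$ and $\mu(l_1)=l_1$. The unique root-to-$l_1$ path $\pi$ of $T'$ is itself a tree whose two endpoints are the root and $l_1$, and the restriction of $\mu$ to $\pi$ still witnesses graph distance $\le\delta$; since $\pi$ has no more vertices or edges than $T'$, a minimum-complexity solution may be taken to be a simple path $\langle w_0,w_1,\dots,w_p\rangle$ in $G$ with $w_0$ the root and $w_p=l_1$. Its complexity is $2p+1$, strictly increasing in $p$, and $p+1$ is the number of (distinct) vertices of $G$ it visits.

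Next I would set up the correspondence with reachable paths. Given such a path $T'$ and $\mu$, each edge $\segment{w_iw_{i+1}}\in E(G)$ maps to a simple path in $T$ at Fr\'echet distance $\le\delta$, i.e.\ a monotone reachable path in the free space $\FSD(\segment{w_iw_{i+1}},T)$; consecutive pieces meet along the spine $\Spine(w_i)$ (where the $G$-coordinate equals the vertex $w_i$), so their concatenation is a single reachable path in $\FSD(G,T)$ whose $T$-coordinate is $r$ at the start and $l_1$ at the end --- hence it starts in an elementary interval of $\slice(r)$ and ends in one of $\slice(l_1)$, namely the intervals meeting the sub-cells where the $G$-coordinate is the vertex $w_0$, respectively the vertex $l_1$. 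Conversely, a reachable path between such intervals crosses a sequence of spines $\Spine(w_0),\dots,\Spine(w_p)$; reading off the $G$-vertices $w_0,\dots,w_p$ yields a path $T'$, and the $T$-coordinate between consecutive crossings yields the image of the corresponding edge. Because $T$ is a tree, the portion of $T$ traced between two consecutive crossings is exactly the unique simple path between its endpoints, so the induced $\mu$ is a legal graph mapping and $\Gdistance(T',T)\le\delta$ (the weak-distance statement follows by dropping monotonicity).

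Combining the two directions, the complexity of the recovered $T'$ is $2p+1$ where $p+1$ is the number of spines crossed, so a minimum-complexity $T'$ corresponds precisely to a reachable path from $\slice(r)$ to $\slice(l_1)$ crossing the fewest spines --- which reduces to a BFS-style shortest-path computation exactly as in \lemref{lem:kappa}. The step I expect to be the most delicate is the converse: verifying that an \emph{arbitrary} reachable path translates back into a \emph{valid} graph mapping that respects the leaf constraint --- in particular, that one must end at the elementary interval of $\slice(l_1)$ whose $G$-coordinate is the \emph{vertex} $l_1$ (not merely a point within $\delta$ of it), and start at the interval of $\slice(r)$ whose $G$-coordinate is the root vertex $w_0$. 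Fixing which sub-cells of the slices are admissible as endpoints is precisely what makes the reduction to a minimum-spine-crossing search come out cleanly, and it is the ingredient that should extend, via dynamic programming over slices, to the general case $k>1$.
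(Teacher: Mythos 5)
Your argument is correct and matches the paper's (essentially unstated) reasoning: the paper offers no proof beyond the one-line remark that for $k=1$ the simplification is ``trivially the path from the respective leaf to the root,'' and your write-up is the natural formalization of exactly that reduction --- restrict to the root-to-$l_1$ path of $T'$, then identify path simplifications with reachable paths in $\FSD(G,T)$ that cross the fewest spines, as in \lemref{lem:kappa}. The endpoint subtlety you flag (the interval of $\slice(l_1)$ must be the one whose $G$-coordinate is the vertex $l_1$ itself, to respect the leaf-restriction) is a genuine point of looseness in the observation as stated, and handling it the way you do is consistent with how the paper's dynamic program later uses the observation.
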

In fact, the simplified tree for $k=1$ is trivially the path from the respective leaf to the
root. 
Our simplification algorithm simplifies the remaining vertices in $T$.
We present the algorithm below:

\paragraph{\bf The algorithm:}
Relying on Observation~\ref{obs:singleleaf}, we first compute the simple path
from every leaf in $l$ to the root in $T$. Merging these paths
forms the \emph{mapping subtree} $M_T$. A vertex where multiple paths
merge/meet in $M_T$ is called an \emph{ancestor}. 
We associate a pointer with each leaf to point to the closest
ancestor obtained along the path to $r$. We repeat the same process for the new
ancestors until we meet $r$. The ancestors and pointers together result in a
tree called \emph{ancestor tree} $A_T$. 
An ancestor $u$ is the \emph{parent} of $v$ ($v$ is a child) if it is pointed to by $v$ in $A_T$. The idea is to use
dynamic programming to propagate the optimal simplified tree rooted at
every elementary interval $I'$ of every child
to the one rooted at every elementary interval $I$ of the parent. For this, we associate a cost function $\psi:[0,1]\rightarrow \Nats$
with each edge $I$, where
$\psi(I)$ is the number of vertices in a minimum-edge simplified
subtree rooted at $I$. Similar to Section~\ref{sec:FPT-tree-tree}, we again recall that a simplified tree $T'$ \emph{is rooted at an elementary interval} $I \in \slice(u)$ if $T'$ simplifies the subtree of $T$ rooted at $u$ and  the mapping realizing $\Gdistance(T',T)\leq \delta$ matches $u$ to a point $x \in I$.  For an elementary interval $I \in \slice(u)$ and $I'\in \slice(v)$, where $u$ is the parent of $v$ in $A_T$, we consider the following recursive formula:


\begin{align}
\psi(I)& = \min\limits_{\E } \sum\limits_{I' \in \E} \big(\psi(I') + \gamma(I',I)\big), \label{formula:ancestor}
\end{align}

\begin{figure}[!t]
	\centering
	\includegraphics[width=.65\textwidth]{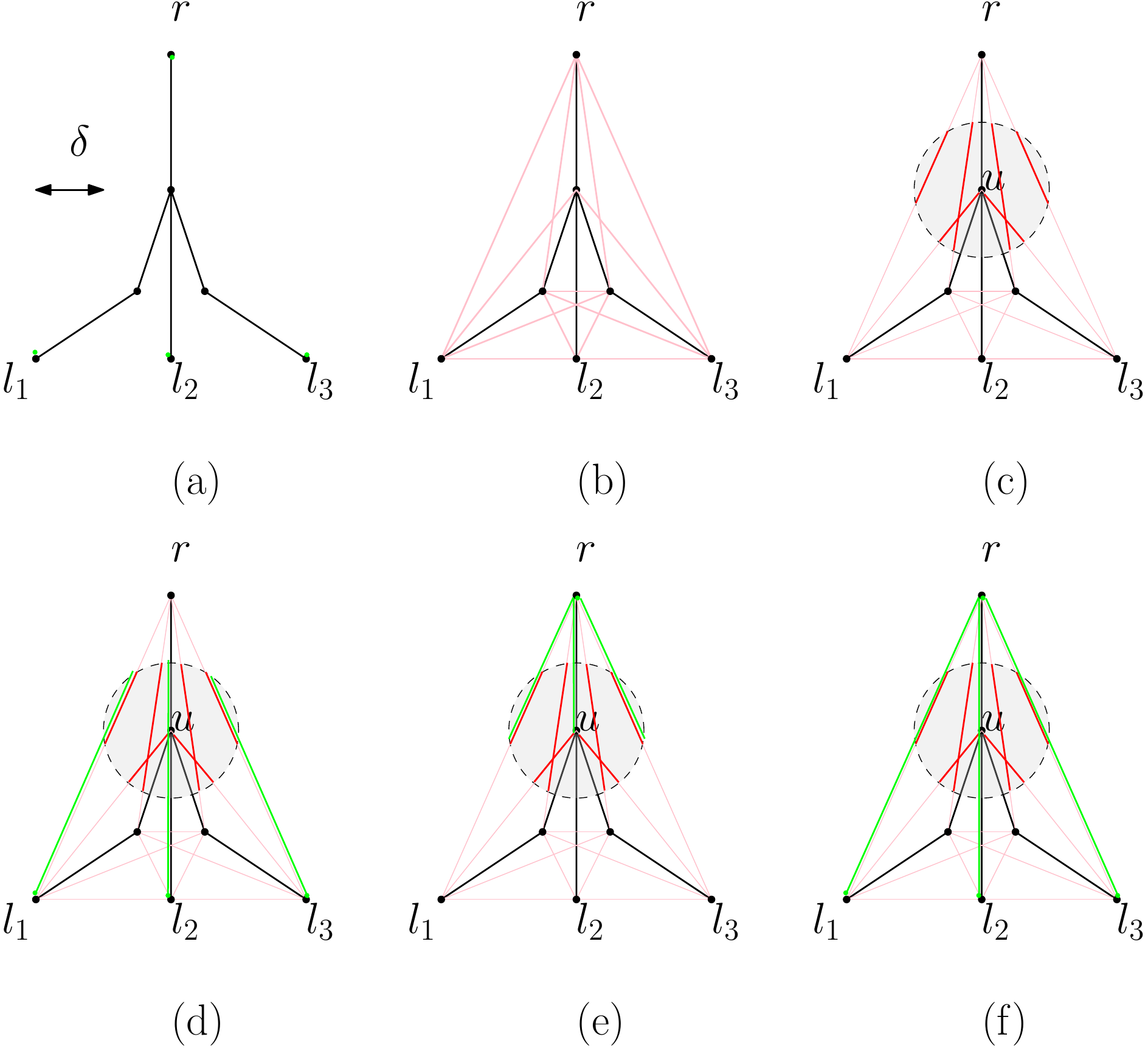}
	\caption[An illustration of the leaf-restricted simplification algorithm]{An illustration of the algorithm. 
		(a) The input tree, leaf set with $k=3$, and $\delta$.
		(b) The shortcut graph (complete graph) $G$ highlighted in pink.
		(c) Computing common ancestor $u$ of the leaves $\{l_1,l_2,l_3\}$, identifying the elementary intervals of $\slice(u)$ highlighted in red.
		(d) Minimum-vertex curve simplification of $P[l_i, u]$ highlighted in green, starting at $l_i$ and ending at all possible elementary intervals in $\slice(u)$, for all $i=1,2,3$.
		(e) The continuation of the minimum-vertex path simplification of $P[u,r]$ starting at elementary intervals in $\slice(u)$ and ending at $r$. (f) The resulting min-vertex simplified tree in green.}
	\label{fig:simpalgo}
\end{figure}

where $\E = \{\E_1 , \E_2, \cdots, \E_{k'}\}$ and each elementary interval $\E_i$ belongs to the elementary interval set $\I_{v_i}$, for all $1\leq i \leq k'$, where $k'$ is the number of children of $u$ in $A_T$. Here, $\gamma(I',I)$ is the number of spines on the reachable path in $\FSD(M_T,G)$ starting at $I$ and ending at $I'$. Note that the path from $u$ to $v_i$ in $M_T$ forms a polygonal curve, thus the simplification of the curve $P[u,v_i] \in M_T$ can be computed by the algorithm proposed in \cite{kklmw-gcs-19} (see also~\cite{kklmw-omnsp-18}). Additionally, for every leaf $u\in l$ we set $\psi(I)=1$.

\begin{remark}
	Unlike the formula in Section~\ref{sec:FPT-tree-tree}, $\E_i \cap \E_j= \emptyset$ for all $\E_i , \E_j \in \E $ with $i \neq j$. If $\E_i \cap \E_j\neq  \emptyset$ , then a point $x \in \E_i \cap \E_j$ would be mapped to two different $v_i$ and $v_j$ in $M_T$ which is not possible under the mapping realizing the graph distance from $T'$ to $T$. Our proposed algorithm can run in polynomial time relying on such a fact. Figure \ref{fig:simpalgo} illustrates on the algorithm described above. 
	
\end{remark}

\begin{lemma} \label{lem:ancestortree}
	The mapping subtree $M_T$ is the only subtree of $T$ that $T'$ is mapped to under $\Gdistance(T',T)\leq \delta$. 
\end{lemma}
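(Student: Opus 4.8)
The plan is to show that any mapping $\mu$ realizing $\Gdistance(T',T)\leq\delta$ must send $T'$ onto exactly the subtree $M_T$, and conversely that $M_T$ itself admits such a mapping from a suitable $T'$, so that no smaller or larger subtree can serve as the image. I would proceed in two directions.

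First, the ``$M_T \subseteq \mu(T')$'' direction. Since this is the leaf-restricted setting, each of the $k$ prescribed leaves $l_1,\dots,l_k$ of $T$ is identical to a leaf of $T'$, and $\mu$ maps that leaf of $T'$ to the corresponding $l_i$; also the root $r$ of $T$ is the image of the root of $T'$ by assumption. Now $\mu$ is a graph mapping, so it sends each edge of $T'$ to a \emph{path} in $T$, and the union of these paths is a connected subset of $T$ containing $r$ and all of $l_1,\dots,l_k$. In a tree, the unique minimal connected subset containing a given set of vertices is the union of the pairwise tree-paths among them, which is precisely $M_T$ (the subtree induced by all simple paths from the leaves in $l$ to $r$). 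Hence $M_T\subseteq\mu(T')$.

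Second, the ``$\mu(T')\subseteq M_T$'' direction, which I expect to be the main obstacle. Here I would argue that any part of $T'$ mapped outside $M_T$ can be deleted without violating the distance constraint or the leaf restriction, contradicting the minimum-complexity of $T'$. Concretely: suppose some edge $e'$ of $T'$ is mapped by $\mu$ to a path that leaves $M_T$, i.e. enters a ``dangling'' branch $B$ of $T$ hanging off $M_T$ (a maximal subtree attached to $M_T$ at a single vertex $b$ and containing no leaf of $l$). Let $T'_B$ be the part of $T'$ whose image under $\mu$ lies in $B$; since $B$ contains no prescribed leaf, $T'_B$ contains no prescribed leaf of $T'$ either. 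Because $T'$ is connected and $B$ attaches to $M_T$ only at $b$, the subtree $T'_B$ attaches to the rest of $T'$ at a single vertex $v'$ with $\mu(v')$ near $b$. We can then excise $T'_B$ (keeping only $v'$), obtaining a strictly smaller tree $T''$; the restricted mapping still realizes $\Gdistance(T'',T)\leq\delta$ because every edge of $T$ inside $B$ was only being \emph{covered redundantly} — wait, the subtlety is that $\Gdistance$ only requires $T'$ to be \emph{near} $T$, not to cover $T$, so edges of $B$ need not be covered at all. Thus deleting $T'_B$ leaves a valid leaf-restricted simplification with fewer vertices, contradicting minimality. Therefore no edge of $T'$ is mapped outside $M_T$, and $\mu(T')\subseteq M_T$; with the first direction this forces $\mu(T')=M_T$, and uniqueness of $M_T$ among subtrees follows from its characterization as the unique minimal connected subtree spanning $l\cup\{r\}$.

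The delicate point deserving care is the excision argument: I must verify that removing $T'_B$ does not disconnect $T'$ (handled by the single-attachment-vertex observation, which itself relies on $\mu$ sending edges to \emph{simple} paths and on the tree structure of $T$ ensuring $B$ meets $M_T$ only at $b$), and that the leaf set of the resulting tree still contains all $l_i$ (handled by the fact that $B$ contains none of the prescribed leaves, so $T'_B$ contributes no $l_i$). Once these are in place, the lemma follows, and it is exactly what justifies running the dynamic program of the following section only over the ancestor tree $A_T$ / mapping subtree $M_T$ rather than over all of $T$.
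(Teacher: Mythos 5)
Your first direction ($M_T\subseteq\mu(T')$) is sound and consistent with the paper's setup: the image of the connected tree $T'$ under a graph mapping is connected, and since the prescribed leaves and the root are anchored, the image must contain the unique minimal connected subtree of $T$ spanning $l\cup\{r\}$, which is $M_T$. The gap is in the second direction. Your excision argument rests on the claim that the part $T'_B$ of $T'$ mapped into a dangling branch $B$ attaches to the rest of $T'$ at a single vertex, and you justify this only by the connectedness of $T'$ and the fact that $B$ meets $M_T$ at the single vertex $b$. That is a non sequitur: those facts constrain $T$, not $T'$. A vertex $v'$ of $T'$ with $\mu(v')\in B\setminus\{b\}$ can be a cut vertex of $T'$ separating two parts that are both mapped into $M_T$ (for instance $T'$ a path $a'$--$v'$--$c'$ with $\mu(a'),\mu(c')\in M_T$ and $\mu(v')$ deep inside $B$; both edge images are simple tree paths passing through $b$). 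Deleting such a $T'_B$ disconnects $T'$; repairing connectivity requires a new edge $\segment{a'c'}$ together with a proof that its Fr\'echet distance to the truncated image path is still at most $\delta$ --- and that bound is exactly the nontrivial geometric content, which is nowhere supplied. A secondary, more easily fixable issue: even when $T'_B$ is pendant, excising it can turn $v'$ into a degree-one vertex of the new tree that is not in $l$, violating the leaf restriction, so the pruning would have to be iterated.

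The paper's proof avoids excision entirely. In the critical case it takes the first point $b'$ of $T'$ (along a leaf-to-root path) mapped to a point $b\notin M_T$, lets $u$ be the lowest ancestor of $b$ in $M_T$ (the attachment vertex of the branch containing $b$), and argues: either $\|b'-u\|\leq\delta$, in which case $b'$ could be re-mapped to $u$ and the excursion into the branch is unnecessary, or $\|b'-u\|>\delta$, in which case the image path $\P$ of the edge $\segment{b'a'}$ must turn around at $u$, forcing $\Frechet(\segment{b'a'},\P)>\delta$ and hence $\Gdistance(T',T)>\delta$. In other words, the configuration you are trying to excise is shown never to arise for any mapping witnessing the distance bound; minimality of $T'$ is not what rules it out. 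To complete your route you would have to add a geometric argument of this kind for the multi-attachment case, at which point your proof essentially becomes the paper's.
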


\begin{proof}
	Suppose that $\mu$ is a mapping from the vertices of $T'$ to some points in $T$ realizing $\Gdistance(T',T)\leq \delta$. For the sake of contradiction, let  $M_T \subseteq T$ not be the one that $T'$ is mapped to but there be another subtree $M'_T$ where simplifying it would constitute the optimal simplification $T'$. Now there are two possible cases: (1)  $M'_T \cap M_T = \emptyset$, and (2) $M'_T \cap M_T \neq \emptyset$. 
	 In case (1), we immediately face a contradiction since $M_T$ contains all the leaves in $l$ and $M'_T$ cannot contain any of them in $l$ which is against what $\mu$ does. 

	 In case (2), there are three possible subcases: (i) $M'_T \subset M_T$, (ii) $M_T \subset M'_T$ (iii) there is some vertex $b \in M'_T$ where $b \notin M_T$. In subcase (i) correspondingly we have $V(M'_T) \subseteq V(M_T)$. In other words, there are some vertices of $M_T$ that are missing in $M'_T$. The missing vertices cannot be the root nor the leaves in $l$, because they have to be mapped from (and identical to) their corresponding vertices on $T'$.  Thus there might be some intermediate vertices in $M_T$ that are missing in $M'_T$. This implies that $M'_T$ is disconnected and  $\Gdistance(M'_T, T') = \infty$, which is a contradiction.
	 \begin{figure}[!t]
	 	\begin{center} 
	 		\includegraphics[width=0.3 \textwidth]{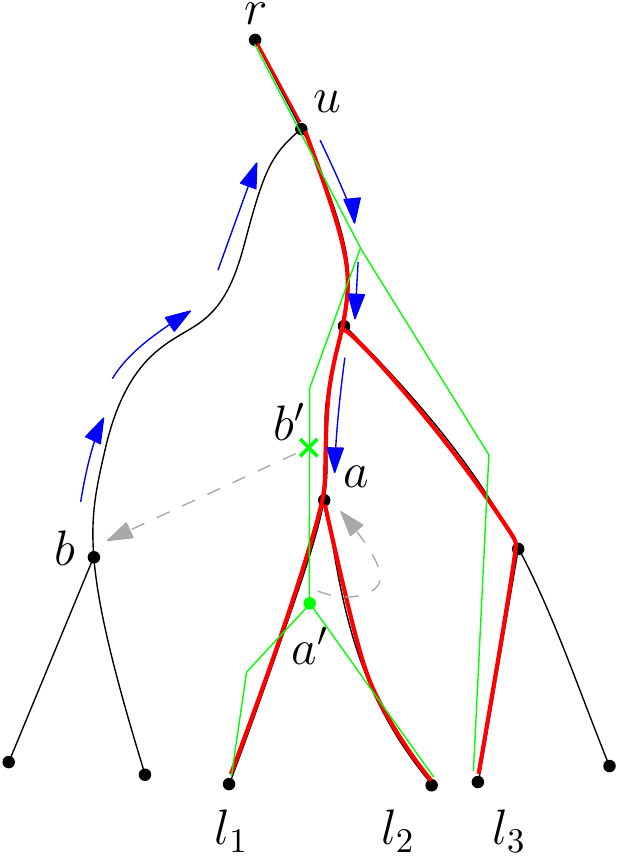}
	 	\end{center}
	 	\caption[The unique mapping subtree]{\label{fig:mapping_subtree} The input tree in solid black, the mapping subtree $M_T$ in red, the simplified tree in green are depicted above. Mapping $b'$ to $b$ results in a turn around (blue arrows) and therefore having large graph distance.}
	 \end{figure}
	 In subcase (ii),
	there is a vertex $b \in M'_T$ and $b \notin M_T$. 
	This implies that there is some point $b' \in T'$ that is mapped to $b$, i.e. $\mu(b')= b$. Suppose $b'$ is the first point  encountered along the path from some leaf to the root that $\mu(b')= b$. Now let $a' \in V(T)$ the latest vertex along the path to $b$ (the endpoint of the segment where $b'$ lies).  Note that if $b$ is a leaf then we immediately have a contradiction since $b \notin l$ due to $b \notin M_T$. Now let $u$ be the lowest ancestor of $b$ in $M_T$.  
	We necessarily know that $u \in M_T$. Observe that $\|b'-u\|>\delta$ otherwise $\mu(b')=u$ and there is no further need for assuming that $b$ exists in $M_T$. Now consider the path $\P$ in $M'_T$ from $\mu(b')$ to $\mu(a')$. Clearly $\P$ passes through $u$ (see Figure~\ref{fig:mapping_subtree}). This implies that 
	$\Frechet(\segment{b'a'},\P)>\delta$ (and $\wFrechet(\segment{b'a''},\P)>\delta$) since $\|b'-u\|>\delta$. This leads to $\Gdistance(T',T)> \delta$ (and $\WGdistance(T',T)> \delta$) which is a contradiction.
	The argument for subcase (iii) is similar to (ii).  This completes the proof.
\end{proof}

\begin{lemma}\label{lem:ancestorDPformula}
	The DP formula (\ref{formula:ancestor}) correctly computes $\psi(I)$.  
\end{lemma}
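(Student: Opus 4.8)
The plan is to prove correctness by structural induction on the ancestor tree $A_T$, processing its vertices from the prescribed leaves $l$ up toward the root $r$, in exactly the same spirit as Lemma~\ref{lem:correctness}. For the base case, when $u\in l$ is a leaf of $A_T$, the subtree of $M_T$ rooted at $u$ is the single required leaf $u$, which $T'$ must contain verbatim; by minimality the corresponding simplified subtree rooted at any $I\in\slice(u)$ is a single vertex, so $\psi(I)=1$, which is precisely the stated initialization.

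For the inductive step I would fix an internal ancestor $u$ with children $v_1,\dots,v_{k'}$ in $A_T$ and an elementary interval $I\in\slice(u)$, and take an optimal simplified subtree $T^*_I$ rooted at $I$, witnessed by a mapping $\mu$ with $\Gdistance(T^*_I,T)\le\delta$ sending the vertex of $T'$ that represents $u$ into $I$. By Lemma~\ref{lem:ancestortree}, $T^*_I$ maps onto the part of $M_T$ below $u$, so cutting $T^*_I$ at that vertex decomposes it into $k'$ branches, the $i$-th of which simplifies the path of $M_T$ running from $u$ down through $v_i$. Each branch splits further, at the vertex mapped to $v_i$, into a curve simplifying the polygonal path $P[u,v_i]$ in $M_T$ and a simplified subtree rooted at $v_i$. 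By optimality the first piece is a minimum-vertex simplification of $P[u,v_i]$ from $I$ to some $\E_i\in\I_{v_i}$ — a shortest spine-crossing reachable path in $\FSD(M_T,G)$ — hence contributes exactly $\gamma(\E_i,I)$ vertices (this is where Observation~\ref{obs:singleleaf} and the curve-simplification procedure of~\cite{kklmw-gcs-19} enter), while the second piece contributes $\weight(T^*_{\E_i})=\psi(\E_i)$ by the inductive hypothesis. Summing over $i$, with the same shared-endpoint bookkeeping as in Lemma~\ref{lem:correctness}, gives $\weight(T^*_I)=\sum_i(\psi(\E_i)+\gamma(\E_i,I))$, and optimality of $T^*_I$ forces the tuple $(\E_1,\dots,\E_{k'})$ to minimize this over all admissible tuples, i.e. to equal the right-hand side of~(\ref{formula:ancestor}). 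For the reverse inequality I would, given any admissible tuple, glue the corresponding optimal sub-pieces at the common point of $I$ to obtain a feasible simplified subtree rooted at $I$ of the same weight, so the formula's minimum also upper-bounds $\psi(I)$; equality follows.

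Two points deserve care, and I would flag them explicitly. First, the formula ranges over tuples with exactly one interval $\E_i\in\I_{v_i}$ per child, and by the Remark following~(\ref{formula:ancestor}) the $\E_i$ must be pairwise disjoint: a point lying in $\E_i\cap\E_j$ would be forced by $\mu$ to map to the two distinct vertices $v_i,v_j$ of $M_T$, impossible since $\mu$ is a function; this is the structural difference from the weighted-set-cover version of Section~\ref{sec:FPT-tree-tree}, and it is what keeps the recursion polynomial-time solvable. Second — and I expect this to be the main obstacle, rather than the weight arithmetic — one must verify that the branch-wise pieces genuinely assemble into a single valid mapping: all $k'$ branch simplifications emanate from the one vertex of $T'$ that $\mu$ sends to $u$, so one has to check that a single choice of a point $x\in I$ for that vertex is simultaneously consistent with every branch, and that monotonicity of the edge-wise Fr\'echet matchings (or the lack of it, in the weak case) is preserved under the gluing. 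Since the branch matchings are read off independent reachable paths in $\FSD(M_T,G)$ that all start at $I$, this consistency should go through, but it is the step requiring the most careful writing.
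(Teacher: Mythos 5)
Your proof follows essentially the same route as the paper's: structural induction from the leaves of $A_T$ up to the root, decomposing an optimal simplified subtree $T^*_I$ at the elementary intervals $\E_i \in \I_{v_i}$ of the children, identifying the connecting-path weight with $\gamma(\E_i,I)$ by optimality and the branch weights with $\psi(\E_i)$ by the inductive hypothesis, and then taking the minimum over admissible tuples. You are in fact somewhat more careful than the paper — explicitly handling the reverse (gluing) inequality and flagging the consistency of a single anchor point $x\in I$ across all branches, which the paper leaves implicit — but the underlying argument is the same.
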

\begin{proof}
	We use a proof by induction. Suppose $I \in \slice(u)$ and $u$ is a leaf. Obviously the vertex-restricted minimum-vertex simplified tree rooted at $I$ is a single vertex tree where $u\in l$ is a leaf. Therefore $\psi(I) = 1$. 
	Now we consider the case where $u$ is an interior vertex in $T$. Suppose $T^*_I$ is an optimal simplified tree rooted at $I$. Observe that $T^*_I$ passes through a set of elementary intervals $\E = \{\E_1 , \E_2, \cdots, \E_{k'}\}$ with $\E_i \in \I_{v_i}$, for all $1\leq i \leq k'$.  Therefore we have:
	
	$$\psi(I)= \min\limits_{T'_I} \weight(T'_I) = \weight(T^*_I) = \sum\limits_{I' \in \E} (\weight(T^*_{I'})+\weight(P(I',I))),$$
	
	where $P$ is a min-edge path between $I\in \slice(u)$ and $I'\in \slice(v_i)$. Note that $\weight(P(I',I)) = \gamma(I',I)$ because otherwise $T^*_I$ would no longer be optimal and $\weight(T^*_{I'}) = \psi(I')$ by the inductive hypothesis.  
	Thus we have: 
	
	$$\psi(I)= \weight(T^*_I) = \sum\limits_{I' \in \E} \big(\psi({I'})+\gamma(I',I)\big).$$
	
	Realize that $\I_{v_i}$ consists of elementary intervals one from each child of $u$ in $A_T$. Following Lemma \ref{lem:ancestortree}, $T'$ has to be mapped to $M_T$ where $u \in A_T$ and $\Cld(u)= \{v_1,\cdots, v_{k'}\} \in A_T$. Also $T'$ has to be mapped to each of the elementary intervals in $\E = \{\E_1 , \E_2, \cdots, \E_{k'}\}$, otherwise missing one of the intervals results in $\Gdistance(T',T)> \delta$.
	Therefore, we have: 
	
	$$\psi(I)= \weight(T^*_I) = \min\limits_{\E } \sum\limits_{I' \in \E} \big(\psi({I'})+\gamma(I',I)\big),$$ as desired.
\end{proof}
\begin{theorem} \label{thm:OTTLGD}
	Let $\delta>0$ and $k>0$ be an integer. There is an algorithm running in $O(kn^5)$ time that uses $O(kn^2)$ space for the leaf-restricted tree-tree simplification under $\Gdistance(T',T)\leq \delta$.
\end{theorem}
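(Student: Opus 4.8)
The plan is to verify that the algorithm described just above is correct and that it runs within the claimed bounds. Correctness is essentially already in hand: Lemma~\ref{lem:ancestortree} shows that any simplification $T'$ with $\Gdistance(T',T)\le\delta$ must be mapped exactly onto the mapping subtree $M_T$, and Lemma~\ref{lem:ancestorDPformula} shows that formula~(\ref{formula:ancestor}) correctly computes $\psi(I)$ for every elementary interval $I$ in the slice of an $A_T$-vertex. Combined with Observation~\ref{obs:singleleaf} for the base case $k=1$ and a standard backtracking pass that reconstructs $T'$ from the stored argminima, this already yields an optimal leaf-restricted $T'$; so what remains is purely a time and space analysis.

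First I would account for preprocessing. Building the complete (shortcut) graph $G$ on $V(T)$ costs $O(n^2)$. Tracing the $k$ root paths of the leaves in $l$, overlaying them into $M_T$, and contracting degree-two chains into the ancestor tree $A_T$ take $O(kn)$ time. The key structural facts are that $A_T$ has at most $k$ leaves, hence fewer than $k$ branching ancestors and thus $O(k)$ vertices and $O(k)$ edges, and that the paths $P[u,v]\subseteq M_T$ over all edges $(u,v)$ of $A_T$ are pairwise edge-disjoint, so their total length is at most $|E(M_T)|\le n$. For any vertex $w$ occurring in $M_T$, the elementary intervals of $\slice(w)$ are obtained by intersecting the $\delta$-ball at $w$ with each edge of $G$, at most one interval per edge, so every slice has $O(n^2)$ elementary intervals and is produced in $O(n^2)$ time.

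Next I would analyze the core step, the bottom-up evaluation of formula~(\ref{formula:ancestor}), which factors as $\psi(I)=\sum_{v\in\Cld(u)}\min_{I'\in\I_v}\big(\psi(I')+\gamma(I',I)\big)$; hence for each edge $(u,v)$ of $A_T$ it suffices to know $\gamma(I',I)$ for all $I\in\slice(u)$ and all $I'\in\slice(v)$. For a fixed source interval $I$, all values $\gamma(\cdot,I)$ are obtained by one shortest-path propagation over the free space of $\FSD(P[u,v],G)$ in which crossing a spine costs one and every other move is free --- this is precisely the vertex-restricted curve-simplification routine of~\cite{kklmw-gcs-19} applied to the polygonal path $P[u,v]$, and it is the analogue of the quantity $\kappa$ of Section~\ref{sec:FPT-tree-tree}, now with interval endpoints. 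This free space has $O\big(|E(P[u,v])|\cdot|E(G)|\big)=O(n\cdot n^2)=O(n^3)$ cells, so one propagation costs $O(n^3)$; ranging over the $O(n^2)$ source intervals of $\slice(u)$ gives $O(n^5)$ per edge of $A_T$, and the $O(k)$ edges of $A_T$ give $O(kn^5)$ in total. Merging the resulting $\gamma$-values with the children's previously computed $\psi$-tables to evaluate the $O(n^2)$ values $\psi(I)$ at each $A_T$-vertex costs only $O(n^4)$ per $A_T$-edge, i.e.\ $O(kn^4)$ overall, which is dominated; reading $\psi$ off at the root from $\slice(r)$ and backtracking to output $T'$ stays within the same $O(kn^5)$ bound. (The same argument with non-monotone propagation handles $\WGdistance(T',T)\le\delta$ at no extra cost.)

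For the space bound I would argue that neither the full diagram $\FSD(M_T,G)$ nor all slices need to be held simultaneously. Because the reachability propagation is monotone along $P[u,v]$, it can be run edge-by-edge, keeping only the $O(n^2)$ reachable entries on the current cell column; intermediate vertices' slices are used transiently and then discarded, so a single $\gamma$-computation uses only $O(n^2)$ working space. The only data retained throughout is the family of $\psi$-tables for the slices of the $O(k)$ vertices of $A_T$, each of size $O(n^2)$, plus $G$ itself, i.e.\ $O(kn^2)$; hence the algorithm uses $O(kn^2)$ space. I expect the main obstacle to be the careful bookkeeping inside the $\gamma$-subroutine: showing that a single pass over $\FSD(P[u,v],G)$ from a fixed source yields $\gamma(I',I)$ simultaneously for all targets $I'$, and that this pass can be streamed in $O(n^2)$ space without degrading the $O(kn^5)$ time bound or breaking the monotonicity required by the strong graph distance.
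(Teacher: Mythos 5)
Your proposal is correct and follows essentially the same route as the paper: compute $M_T$ and $A_T$ in $O(kn)$ time, evaluate the recurrence bottom-up over the $O(k)$ ancestors, and charge $O(n^3)$ per source interval (via the curve-simplification routine of~\cite{kklmw-gcs-19} on $P[u,v]$) across $O(n^2)$ intervals per slice, giving $O(k\cdot n^2\cdot n^3)=O(kn^5)$ time and $O(kn^2)$ space for the stored $\psi$-tables. Your explicit factorization of formula~(\ref{formula:ancestor}) into independent per-child minima and the streaming argument for the space bound are slightly more detailed than the paper's write-up but amount to the same algorithm and the same accounting.
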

\begin{proof}
	
	first realize that $|\I_u| = |E(G)| = O(n^2)$. Also computing $M_T$ takes $O(kn)$ time. The only remaining part is to compute $\psi(I)$ for all elementary intervals $I\in \slice(u)$ and all ancestors $u \in M_T$. Since $|\I_u|=O(n^2)$ and there are $O(k)$ vertices $u$ in $M_T$, thus there are $O(kn^2)$ starting intervals $I$ to compute $\gamma(I,I')$ for. Computing $\gamma(I,I')$ takes $O(n^3)$ for all $I'$ following the algorithm in \cite{kklmw-gcs-19} under both weak and strong Fr\'echet distances. Overall, the algorithm takes $O(kn)$+ $O(k\cdot n^2\cdot n^3)= O(kn^5)$.	Since we have $O(k)$ nodes like $u$ and $O(n^2)$ elementary intervals to store their $\psi$ values, thus the space required for this algorithm is $O(kn^2)$.
\end{proof}

\section{NP-Hardness for Edge-Restricted Simplification from Input to Output}\label{sec:NP-nonrestricted}
In this section, we show that the edge-restricted simplifications for variety of inputs and outputs to be either graph or tree or a curve, under both graph and traversal distances is (weakly) NP-hard. 
We use the NP-hardness template for edge-restricted curve simplification under the weak and strong Fr\'echet distances provided by Van Kerkhof et al. \cite{kklmw-gcs-19}. The comprehensive version of their construction is presented in \cite{kklmw-gcsarx-19}. Note that the modified construction in this section works solely under the strong graph distance, however modifying the construction in \cite{kklmw-gcsarx-19} under the weak Fr\'echet  distance in a similar way obtains us the (weakly) NP-hardness result under weak graph distance and traversal distance as well.  
\begin{figure}[!t]
	\begin{center} 
		\includegraphics[width=1.05 \textwidth]{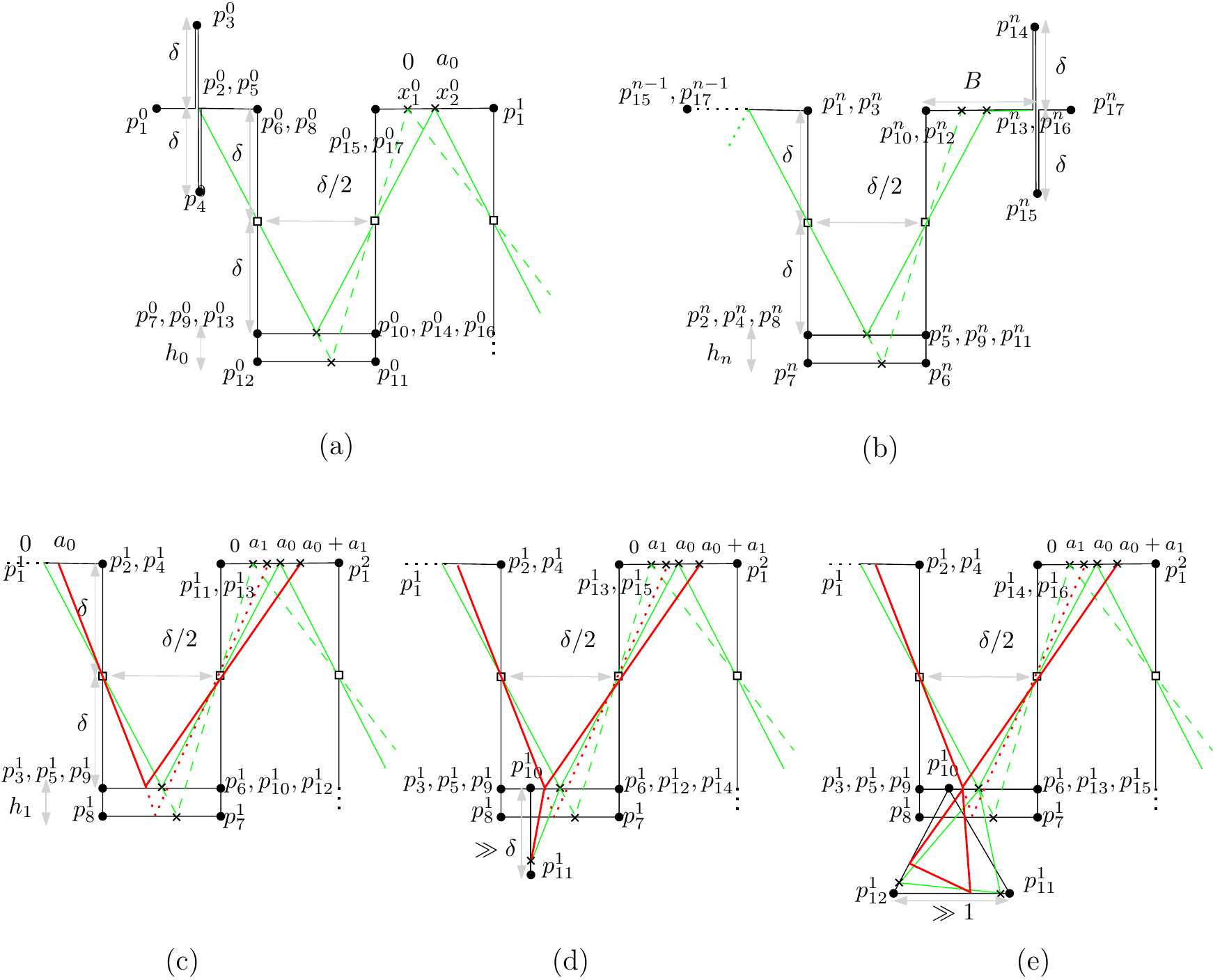}
	\end{center}
	\caption[The modified reduction tree for the edge-restricted variant]{\label{fig:edgesketch}  (a) The first modified curve gadget. (b) The last modified curve gadget. The last edge of the entire curve has length $B$. (c) The intermediate curve gadget; in this example two possible links on the simplification from the previous gadget produce four  possible links induce by the loop zone. These four links encodes all partial sums up to the intermediate gadget. (d) The intermediate tree gadget, (e) The intermediate graph gadget.}
\end{figure}

The reduction is from the subset sum problem: given a universal set $A=\{a_0, a_1, \cdots, a_n\}$ of positive integers and an integer value $B>0$, one asks for the existence of a subset of $A$ whose elements sum up to $B$. As shown in Figure~\ref{fig:edgesketch} the reduction curve in \cite{kklmw-gcs-19} is constructed in such a way that a simplified curve  $P'$ should pass through the midpoints of the vertical zigzag edges of length $2\delta$ and hit either the upper or lower edge in the loop zone at the bottom of each gadget. The first and last gadgets have the zigzag edges of length $2\delta$ on their top left and right edges, respectively. These zigzag edges control the simplified curve to start and end at an edge on the input curve and not necessarily at a vertex. In the last gadget the the zigzag edge is located at distance $B$ to the previous vertex on the top right edge as shown in Figure~\ref{fig:edgesketch} (b). The horizontal distance between the first vertex and the zigzag edges in the first gadget is sufficiently small. Similarly the distance between the last vertex and the zigzag edges in the last gadget is small enough as well.

The entire curve is a sequential combination of the first gadget $g_0$, $n-1$ similar intermediate gadgets $g_i$ for $0<i<n$, and the last gadget $g_n$, i.e., $ \langle g_0,  g_1, \cdots, g_n\rangle $, that are arranged rightward. The vertex $p^i_j$ is the $j$th vertex of $g_i$ in our construction with $0 \leq i\leq n$ and $1\leq j \leq 17$.
The gadget $g_i$ takes the integer $a_i \in A$ and creates a loop zone of height $h_i=\frac{a_i\delta}{\delta/2-a_i}$. In this construction along with the one presented in~\cite{kklmw-gcsarx-19} it is assumed that $\delta > \sum_{i=0}^{n} a_i$.  Note that the \emph{``loop zone''} is only a simple path whose vertices are overlaid onto top of each other. It then produces a set  $S_i $ of all partial sums of $A_i = \{a_0, a_1,\cdots, a_i\}$ on the top right horizontal edge of the curve demonstrated by $X = \{x^i_1,\cdots , x^i_{2^{i+1}}\}$. In other words, the difference between the points in $X = \{x^i_1,\cdots , x^i_{2^{i+1}}\}$ induced by the two possible links (the solid and dashed one) hitting the top right edge produces a partial sum involving the new integer $a_i$ in the set $A$.  This way, when $g_{i+1}$ takes $a_{i+1}$, it produces the set $S_{i+1}= \{s_i + a_{i+1} ~|~s_i \in S_i\}$, where $S_i$ is the set of all partial sums encoded in $g_i$. In $g_n$ the length of top right horizontal edge is equal to $B$. This way, if there is a partial sum $s_n = B$, where $s_n \in S_N$, then there is a minimum-edge curve simplification of number of edges at most $2(n+1)$. 

Now all we need is to extend the constructed curve gadgets (Figure~\ref{fig:edgesketch} (a), (b), (c))  to a tree (or graph). In order to do this we add an edge hung from the loop zone at the bottom whose length is greater than $\delta$ in a way that each simplified tree hitting either the top or bottom edge of the loop, should also fall within the ball of radius $\delta$ around the bottom most point of the edge. Thus, it needs to use one more edge towards the bottom most vertex (Figure~\ref{fig:edgesketch} (d)). 

This way we only need 3 edges per gadget and the decision parameter on the number of edges is set $3(n+1)$. In the case that we want to simplify a graph with a graph of minimum complexity, all we have to do is to build another loop to force the simplified object contains a loop. The auxiliary loop hung from the loop zone has to have diameter larger than $\delta$ consisting of 3 edges. The decision parameter in this case is $5(n+1)$. We have the following theorem:

\begin{theorem} \label{thm:edge-restricted-gg}
	The min-complexity edge-restricted graph-graph, and the min-edge edge-restricted tree-tree from input to output are (weakly) NP-hard.
\end{theorem}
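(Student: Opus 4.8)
The plan is to make precise the reduction from \textsc{Subset Sum}, which is weakly NP-hard, for each input/output type (curve, tree, graph) and each distance (strong graph distance, weak graph distance, traversal distance). Given an instance $(A=\{a_0,\dots,a_n\}, B)$ with $\delta>\sum_i a_i$, I would first quote as a black box the guarantee of the edge-restricted curve construction of Van Kerkhof et al.~\cite{kklmw-gcsarx-19}: the reduction curve $P=\langle g_0,g_1,\dots,g_n\rangle$ has the property that every edge-restricted curve $P'$ with $\Frechet(P,P')\le\delta$ uses at least two edges inside each gadget $g_i$ (one forced through the midpoint of the length-$2\delta$ zigzag, one hitting the upper or lower edge of the loop zone), and that a simplification with exactly $2(n+1)$ edges exists if and only if some subset of $A$ sums to $B$ --- the upper/lower choice in $g_i$ encoding whether $a_i$ is selected, the offsets $x^i_j$ on the top-right edge encoding all partial sums, and $g_n$'s top-right edge having length exactly $B$. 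The same statement holds with $\Frechet$ replaced by $\wFrechet$ via the weak-Fr\'echet variant of their construction, which they also provide; and since for curves $\wFrechet$ coincides with $\Traversal$, the curve case of the theorem is immediate.

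For the tree case I would augment each gadget by attaching, at the bottom vertex of its loop zone, a pendant edge $e_i$ pointing straight down of length strictly between $\delta$ and $\delta+\eps$ for a small $\eps>0$; call the result $T$, with decision parameter $3(n+1)$. Direction \textbf{(i)}: a subset summing to $B$ yields a tree $T'$ with $3(n+1)$ edges and $\Gdistance(T,T')\le\delta$ --- take the $2(n+1)$-edge curve simplification from the black box and, in each gadget, add one edge from the loop-zone vertex it already uses down to the tip of $e_i$, which is mapped onto $e_i$ at Fr\'echet distance $0$. Direction \textbf{(ii)}: if $T'$ is an edge-restricted simplification with at most $3(n+1)$ edges, then restricting the witnessing mapping to the preimage of a single gadget shows, exactly as in the curve reduction, that two edges of $T'$ are forced there for the zigzag and the loop, while a third is forced because the tip of $e_i$ lies at distance $>\delta$ from every point reachable by those two edges; hence $T'$ uses exactly three edges per gadget, and deleting the pendant-covering edges leaves a $2(n+1)$-edge curve simplification, so the black box forces a subset of $A$ summing to $B$.

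The graph-to-graph case is the same reduction with the pendant replaced by a small geometric cycle $C_i$ of diameter larger than $2\delta$ (three edges) hung at the bottom of each loop zone. Covering $C_i$ within graph distance $\delta$ forces the output to contain a cycle in the $\delta$-neighbourhood of $C_i$: the image of the cyclic edge-sequence of $C_i$ is a connected subgraph within Hausdorff distance $\delta$ of $C_i$, and no subtree can be that close to a cycle of diameter exceeding $2\delta$ (removing an edge of such a subtree would separate the cycle, a contradiction). This costs three extra edges per gadget, giving decision parameter $5(n+1)$, and the forward/backward arguments are verbatim those of the tree case with ``three'' replaced by ``five'', using that the output is forced to contain $n+1$ pairwise disjoint cycles. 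Because the only non-trivially large coordinates in the whole construction --- the loop heights $h_i=a_i\delta/(\delta/2-a_i)$ and the length-$B$ final edge --- have bit-size polynomial in the input precisely when $\delta$ is polynomially bounded (the \textsc{Subset Sum} regime), the hardness obtained is weak NP-hardness. For the weak graph distance and the traversal distance one restarts from the weak-Fr\'echet version of the base construction; the pendant and cycle obstructions are insensitive to monotonicity of the parameterization, so both augmentations go through unchanged, yielding $\WGdistance$ and $\Traversal$ hardness as well.

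The step I expect to be the main obstacle is the per-gadget lower bound in the presence of the richer output class: one must rule out that the simplified tree or graph exploits its extra freedom to let a single edge, or a shared vertex, serve two adjacent gadgets, or to reuse the auxiliary cycle's edges to also cover part of the loop zone, thereby saving edges. This is controlled by a spacing argument --- consecutive pendant tips (resp.\ auxiliary cycles) and the two ``curve'' obstructions of distinct gadgets are kept pairwise more than $2\delta$ apart and outside one another's $\delta$-neighbourhoods, exactly as distinct zigzags are kept from overlapping in the original construction --- but choosing the gadget separation, the pendant length, and $\eps$ so that all these inequalities hold simultaneously, and verifying that the witnessing mapping restricted to one gadget's preimage is itself a legal mapping for the corresponding curve sub-instance, is where the careful work lies.
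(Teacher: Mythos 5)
Your proposal follows essentially the same route as the paper: a reduction from \textsc{Subset Sum} via the edge-restricted curve gadgets of Van Kerkhof et al., augmented with a pendant edge per loop zone (decision parameter $3(n+1)$) for the tree case and an auxiliary cycle per loop zone (decision parameter $5(n+1)$) for the graph case, with the weak-Fr\'echet/traversal variants obtained from the weak version of the base construction. If anything, your writeup is more explicit than the paper's sketch about the per-gadget edge counting, the spacing requirements, and the argument forcing the output to contain cycles.
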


\section{Concluding Remarks}
In this paper, we studied the problem of approximating a graph with an alternative simpler graph with a minimum complexity preserving the Fr\'echet-like distances between them. To this end, we considered the two main Fr\'echet-like distances; traversal and graph distances under different constraints in which the vertices of the simplified graph can be placed. While this was an initial work under such distances, we obtained a set of NP-hardness and algorithmic results depending on the problem variants and have left some of the variants as open problems. We believe that other variants of the problem that we have not covered in this paper, and have applications in real life, may admit polynomial-time algorithms. Any further investigation on this problem, providing approximation algorithms  and obtaining a new results for the non-restricted case can be of interest.


\bibliographystyle{abbrv}
\bibliography{treesimplification}

\end{document}